\newcommand\Tx[1]{\mathrm{#1}}
\newcommand\Nm[1]{\lvert #1\rvert}
\newcommand{\RN}[1]{\textup{\uppercase\expandafter{\romannumeral#1}}}
\newtheorem{theo}{Theorem}
\newtheorem{lemma}{Lemma}
\newtheorem{exam}{Example}
\newtheorem{defi}{Definition}
\def\old@comma{,}
    \old@comma\discretionary{}{}{}%
\def\BState{\State\hskip-\ALG@thistlm}
\definecolor{apple green}{rgb}{0.17,0.75,0.13}
\definecolor{edit}{rgb}{0.0,0.0,0.0}
\definecolor{edit_ah}{rgb}{0.0,0.0,0.0}
\newcommand*{\algrule}[1][\algorithmicindent]{\makebox[#1][l]{\hspace*{.5em}\vrule height 0.9 \baselineskip depth 0.3\baselineskip}}%
\def\ALG@printindent{%
    \ifnum \theALG@nested>0% is there anything to print
        \ifx\ALG@text\ALG@x@notext% is this an end group without any text?
            % do nothing
            \addvspace{0pt}% FUDGE for cases where no text is shown, to make the rules line up
        \else
            \unskip
            % draw a rule for each indent level
            \ALG@printindent@tempcnta=1
            \loop
                \algrule[\csname ALG@ind@\the\ALG@printindent@tempcnta\endcsname]%
                \advance \ALG@printindent@tempcnta 1
            \ifnum \ALG@printindent@tempcnta<\numexpr\theALG@nested+1\relax% can't do <=, so add one to RHS and use < instead
            \repeat
        \fi
    \fi
    }%
\patchcmd{\ALG@doentity}{\noindent\hskip\ALG@tlm}{\ALG@printindent}{}{\errmessage{failed to patch}}
\begin{document}
\bstctlcite{IEEEexample:BSTcontrol}
\title{\vspace{-0.45em}GRADE-AO: Towards Near-Optimal Spatially-Coupled Codes With High Memories\vspace{-0.25em}}
\author{\IEEEauthorblockN{Siyi Yang\IEEEauthorrefmark{1}, Ahmed Hareedy\IEEEauthorrefmark{2}, Shyam Venkatasubramanian\IEEEauthorrefmark{1}, Robert Calderbank\IEEEauthorrefmark{2}, and Lara Dolecek\IEEEauthorrefmark{1}}
\IEEEauthorblockA{\IEEEauthorrefmark{1}Electrical and Computer Engineering Department, University of California, Los Angeles, Los Angeles, CA 90095 USA\\
\IEEEauthorrefmark{2}Electrical and Computer Engineering Department, Duke University, Durham, NC 27708 USA\\
siyiyang@ucla.edu, ahmed.hareedy@duke.edu, shyam1999@ucla.edu, robert.calderbank@duke.edu, and dolecek@ee.ucla.edu\vspace{-0.4em}
}}
\maketitle

\begin{abstract}
Spatially-coupled (SC) codes, known for their threshold saturation phenomenon and low-latency windowed decoding algorithms, are ideal for streaming applications. They also find application in various data storage systems because of their excellent performance. SC codes are constructed by partitioning an underlying block code, followed by rearranging and concatenating the partitioned components in a ``convolutional'' manner. The number of partitioned components determines the ``memory'' of SC codes. While adopting higher memories results in improved SC code performance, obtaining optimal SC codes with high memory is known to be hard. In this paper, we investigate the relation between the performance of SC codes and the density distribution of partitioning matrices. We propose a probabilistic framework that obtains (locally) optimal density distributions via gradient descent. Starting from random partitioning matrices abiding by the obtained distribution, we perform low complexity optimization algorithms over the cycle properties to construct high memory, high performance quasi-cyclic SC codes. Simulation results show that codes obtained through our proposed method notably outperform state-of-the-art SC codes with the same constraint length and codes with uniform partitioning. 

%so-called topologically-coupled (TC) codes that subsume and admit more flexible coupling patterns than SC codes. We establish a finite-length optimization framework for TC codes based on the OO-CPO framework for SC codes. Simulation results show that TC codes significantly outperform SC code in both threshold and error floor region.
\end{abstract}

\IEEEpeerreviewmaketitle

\section{Introduction}
\label{sectoin: introduction}

Spatially-coupled (SC) codes, also known as low-density parity-check (LDPC) codes with convolutional structures, are an ideal choice for streaming applications and storage devices thanks to their threshold saturation phenomenon \cite{5695130,kumar2014threshold,olmos2015scaling,hareedy2017high,lentmaier2010iterative} and amenability to low-latency windowed decoding \cite{Iyengar2013windowed}. SC codes are constructed by partitioning the parity-check matrix of an underlying block code, followed by rearranging the component matrices in a ``convolutional'' manner. In particular, component matrices are concatenated into a ``replica'', and then multiple replicas are placed together, resulting in a ``coupled'' code. The number of component matrices minus one is referred to as the ``memory'' of the SC codes \cite{mitchell2015spatially,esfahanizadeh2018finite,hareedy2020channel,pusane2011deriving}.

It is known that the performance of an SC code improves as its memory increases. This is a byproduct of improved node expansion and additional degrees of freedom that can be utilized to decrease the number of small cycles and detrimental objects \cite{esfahanizadeh2018finite,hareedy2020channel,dolecek2010analysis,naseri2020spatially}. Although the optimization problem of designing SC codes with memory less than $4$ has been efficiently solved \cite{esfahanizadeh2018finite,hareedy2020channel}, there remains a vacuum in efficient algorithms that construct good enough SC codes with high memories. Esfahanizadeh et al. \cite{esfahanizadeh2018finite} proposed a combinatorial framework to develop optimal quasi-cyclic (QC) SC codes, comprising so-called optimal-overlap (OO) to search for the optimal partitioning matrices, and circulant power optimization (CPO) to optimize the lifting parameters, which was extended by Hareedy et al. \cite{hareedy2020channel}. However, this method is hard to execute in practice for high memory codes due to the increasing computational complexity. Battaglioni et al. developed an algorithmic method that searches for good SC codes with high memories \cite{battaglioni2017design}. However, high memory codes designed by purely algorithmic methods are unable to offer strict guarantees on performance superiority; several of these codes can even be beat by optimally designed QC-SC codes with lower memories under the same constraint length. Therefore, a method that theoretically identifies an avenue to a near-optimal construction of SC codes with high memories is of significant interest. 

In a way similar to random coding in spirit, our objective is to obtain some near-optimal solutions starting from~a random partitioning matrix, where the density distribution of component matrices (i.e., edge distribution) is analogous to the degree distribution in random coding. While discrete optimization methods \cite{esfahanizadeh2018finite,hareedy2020channel} have been shown to suffer from exponential growths in complexity, fueled by the increase in degrees of freedom, we adopt a more efficient, probabilistic framework that searches for the optimal edge distribution via gradient descent, referred to as \textbf{gradient-descent distributor (GRADE)}, followed by an \textbf{algorithmic optimizer (AO)} that obtains a locally optimal partition near a random partition with edge distribution obtained from GRADE. The current goal is still to minimize the number of small cycles, which reduces undesirable dependencies, and thus improves the code performance. The impact of this probabilistic method extends beyond its performance gains and low complexity. Particularly in the error floor region, a more advanced set of detrimental objects (absorbing sets \cite{dolecek2010analysis}) governs the LDPC code performance. Our probabilistic method also has high potential to be extended to handle detrimental objects specified by the channel.

In this paper, we propose a probabilistic framework that efficiently searches for near-optimal SC codes with high memories. In \Cref{section: preliminaries}, we introduce preliminaries of SC codes and the performance-related metrics. In \Cref{section: framework}, we develop the theoretical basis of GRADE, which derives a locally optimal edge distribution from an arbitrarily provided initial distribution and conditions. In \Cref{section: construction}, we introduce two examples of GRADE-AOs that result in near-optimal SC codes: the so-called \textbf{gradient descent (GD) codes} and \textbf{topologically-coupled (TC) codes}. Our proposed framework is validated in \Cref{section: simulation} by simulation results of four groups of codes, with the best code in each obtained from GRADE-AO. Finally, we make concluding remarks and introduce possible future work in \Cref{section: conclusion}.

\section{Preliminaries}
\label{section: preliminaries}

In this section, we recall the typical construction of SC codes with QC structure. Any QC code with a parity-check matrix $\mathbf{H}$ is obtained by replacing each nonzero (zero) entry of some binary matrix $\mathbf{H}^{\textup{P}}$ with a circulant (zero) matrix of size $z$, $z\in\mathbb{N}$. The matrix $\mathbf{H}^{\textup{P}}$ and $z$ are referred to as the protograph and the circulant size of the code, respectively. In particular, the protograph $\mathbf{H}^{\textup{P}}_{\textup{SC}}$ of an SC code has a convolutional structure composed of $L$ replicas, as presented in Fig.~\ref{fig: SC protograph}. Each replica is obtained by stacking the disjoint component matrices $\{\mathbf{H}^{\textup{P}}_i\}_{i=0}^{m}$, where $m$ is the memory and $\bm{\Pi}=\mathbf{H}^{\textup{P}}_{0}+\mathbf{H}^{\textup{P}}_1+\cdots+\mathbf{H}^{\textup{P}}_m$ is the protograph of the underlying block code. 

In this paper, we constrain $\bm{\Pi}$ to be an all-one matrix of size $\gamma\times \kappa$, $\gamma,\kappa\in\mathbb{N}$. An SC code is then uniquely represented by its partitioning matrix $\mathbf{P}$ and lifting matrix $\mathbf{L}$, where $\mathbf{P}$ and $\mathbf{L}$ are all $\gamma\times \kappa$ matrices. The matrix $\mathbf{P}$ has $(\mathbf{P})_{i,j}=a$ if $(\mathbf{H}^{\textup{P}}_a)_{i,j}=1$. The matrix $\mathbf{L}$ is determined by replacing each circulant matrix by its associated exponent. Here, this exponent represents the power to which the matrix $\bm{\sigma}$ defined by $(\bm{\sigma})_{i,i+1}=1$ is raised, where $(\bm{\sigma})_{z,z+1}=(\bm{\sigma})_{z,1}$.

\begin{figure}
\centering
\includegraphics[width=0.4\textwidth]{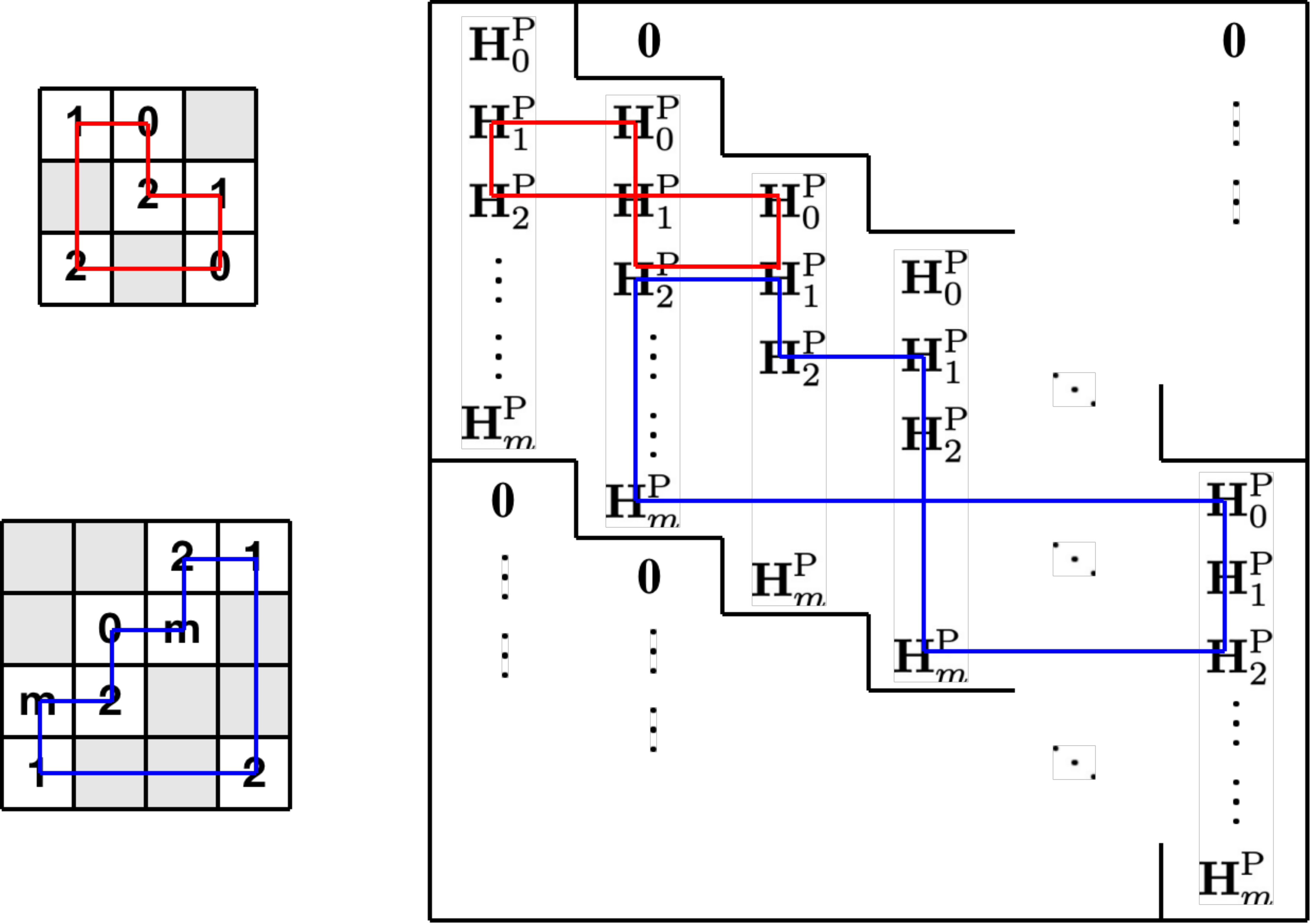}
\caption{Cycles in the protograph (right panel) and their corresponding structures in the partitioning matrices (left panel).}
\label{fig: SC protograph}
\end{figure}

The performance of finite-length LDPC codes is strongly affected by the number of detrimental objects that are subgraphs with certain structures in the Tanner graphs of those codes. Two major classes of detrimental objects are trapping sets and absorbing sets. Since enumerating and minimizing the number of detrimental objects is complicated, existing work typically focuses on common substructures of these objects: the small cycles \cite{esfahanizadeh2018finite,hareedy2020channel,battaglioni2017design}. A cycle-$2g$ candidate in $\mathbf{H}_{\textup{SC}}^{\textup{P}}$ ($\bm{\Pi}$) is a path of traversing a structure to generate cycles of length $2g$ after lifting (partitioning) \cite{hareedy2020channel}. In an SC code, each cycle in the Tanner graph corresponds to a cycle candidate in the protograph $\mathbf{H}_{\textup{SC}}^{\textup{P}}$, and each cycle candidate in $\mathbf{H}_{\textup{SC}}^{\textup{P}}$ corresponds to a cycle candidate $C$ in the base matrix $\mathbf{\Pi}$. \Cref{lemma: cycle condition} specifies a necessary and sufficient condition for a cycle candidate in $\bm{\Pi}$ to become a cycle candidate in the protograph and then a cycle in the final Tanner graph.

\begin{lemma}\label{lemma: cycle condition} Let $C$ be a cycle-$2g$ candidate in the base matrix, where $g\in\mathbb{N}$, $g\geq 2$. Denote $C$ by $(j_1,i_1,j_2,i_2,\dots,j_g,i_g)$, where $(i_{k},j_{k})$, $(i_{k},j_{k+1})$, $1\leq k\leq g$, $j_{g+1}=j_1$, are nodes of $C$ in $\bm{\Pi}$, $\mathbf{P}$, and $\mathbf{L}$. Then $C$ becomes a cycle candidate in the protograph if and only if the following condition follows \cite{fossorier2004quasicyclic}:
\begin{equation}\label{eqn: partition cycle}
\sum\nolimits_{k=1}^{g}\mathbf{P}(i_{k},j_{k})=\sum\nolimits_{k=1}^{g}\mathbf{P}(i_{k},j_{k+1}).
\end{equation}
This cycle candidate becomes a cycle in the Tanner graph if and only if:
\begin{equation}\label{eqn: final cycle}
\sum\nolimits_{k=1}^{g}\mathbf{L}(i_{k},j_{k})\equiv\sum\nolimits_{k=1}^{g}\mathbf{L}(i_{k},j_{k+1}) \mod z.
\end{equation}
\end{lemma}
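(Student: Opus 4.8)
The plan is to decouple the two claims, since condition~\eqref{eqn: final cycle} is exactly the classical quasi-cyclic cycle criterion applied to the protograph, while condition~\eqref{eqn: partition cycle} is the only genuinely spatially-coupled ingredient. First I would establish when the base-matrix candidate $C$ survives as a cycle candidate in $\mathbf{H}^{\textup{P}}_{\textup{SC}}$ (giving~\eqref{eqn: partition cycle}), and then, treating that protograph candidate as input, apply the lifting criterion to decide when it closes into an actual Tanner-graph cycle (giving~\eqref{eqn: final cycle}).

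For the first reduction I would set up the bookkeeping of the convolutional layout. In $\mathbf{H}^{\textup{P}}_{\textup{SC}}$ every column of $\bm{\Pi}$ is replicated $L$ times, once per replica; write $(j,r)$ for the copy of column $j$ in replica $r$. Since entry $(i,j)$ of $\bm{\Pi}$ is assigned to component $\mathbf{H}^{\textup{P}}_{\mathbf{P}(i,j)}$, and component $\mathbf{H}^{\textup{P}}_a$ of replica $r$ is stacked into the check-node row-group indexed by $r+a$ (up to a global constant fixed by the indexing convention), the edge sitting at base position $(i,j)$ and originating from replica $r$ joins the variable node $(j,r)$ to the check node of $\bm{\Pi}$-row $i$ lying in row-group $r+\mathbf{P}(i,j)$.

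Next I would trace $C$ through this structure. Assign to each variable node $j_k$ a replica index $r_k$, with $r_{g+1}=r_1$ because $j_{g+1}=j_1$. At each check node $i_k$ the two incident edges are those at base positions $(i_k,j_k)$ and $(i_k,j_{k+1})$; for them to meet at one check node of $\mathbf{H}^{\textup{P}}_{\textup{SC}}$ their row-groups must coincide, i.e. $r_k+\mathbf{P}(i_k,j_k)=r_{k+1}+\mathbf{P}(i_k,j_{k+1})$, equivalently $r_{k+1}-r_k=\mathbf{P}(i_k,j_k)-\mathbf{P}(i_k,j_{k+1})$. Summing this telescoping relation over $k=1,\dots,g$ kills the left-hand side and yields~\eqref{eqn: partition cycle}; conversely, given~\eqref{eqn: partition cycle}, fixing $r_1$ and reading the recursion as a definition of the partial sums produces a consistent assignment $\{r_k\}$ that closes the walk, so $C$ indeed lifts to a protograph cycle candidate. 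The one delicate point here is the boundary: the constructed $r_k$ must stay inside $\{1,\dots,L\}$, so I would either assume $L$ is large enough to host the candidate or restrict to the bulk, as is standard.

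Finally, with $C$ realized as a candidate in $\mathbf{H}^{\textup{P}}_{\textup{SC}}$, I would invoke the quasi-cyclic criterion of~\cite{fossorier2004quasicyclic}: each nonzero protograph entry is replaced by a power of $\bm{\sigma}$, and since the circulant exponents are shared across replicas, the exponent at the SC position of base $(i,j)$ is $\mathbf{L}(i,j)$. Walking once around the length-$2g$ candidate composes these shifts, alternately adding $\mathbf{L}(i_k,j_{k+1})$ and subtracting $\mathbf{L}(i_k,j_k)$, and the walk closes into a genuine cycle exactly when the net exponent is $0 \bmod z$, which is~\eqref{eqn: final cycle}. I expect the main obstacle to be essentially notational, namely pinning down the replica/row-group offset convention so that the telescoping identity is exact and the two reductions compose cleanly, rather than any substantive difficulty, since the lifting half is a direct citation.
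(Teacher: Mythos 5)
Your proposal is correct, and the paper itself offers no proof of this lemma---it is stated as a known result imported from \cite{fossorier2004quasicyclic}, so there is nothing to diverge from. Your two-step reconstruction (the telescoping replica-index argument for condition (\ref{eqn: partition cycle}), including the correct caveat that $L$ must be large enough for the constructed replica indices to fit, followed by the standard quasi-cyclic alternating-exponent criterion for condition (\ref{eqn: final cycle})) is exactly the standard derivation underlying that citation.
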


As shown in Fig.~\ref{fig: SC protograph}, a cycle-$6$ candidate and a cycle-$8$ candidate in the partitioning matrix with assignments satisfying condition (\ref{eqn: partition cycle}), and their corresponding cycle candidates in the protograph are marked by red and blue, respectively. An optimization of a QC-SC code is typically divided into two major steps: optimizing $\mathbf{P}$ to minimize the number of cycle candidates in the protograph, and optimizing $\mathbf{L}$ to further reduce that number in the Tanner graph given the optimized $\mathbf{P}$ \cite{esfahanizadeh2018finite,hareedy2020channel}. The latter goal has been achieved in \cite{esfahanizadeh2018finite} and \cite{hareedy2020channel}, using an algorithmic method called circulant power optimization (CPO), while the former goal is yet to be achieved for large $m$. We note that the step separation highlighted above notably reduces the overall optimization complexity.

In the remainder of this paper, we focus on SC codes for the additive white Gaussian noise (AWGN) channel, where the most detrimental objects are the low weight absorbing sets \cite{esfahanizadeh2018finite}. Consequently, a simplified optimization focuses on cycle candidates of lengths $4$, $6$, and $8$ \cite{esfahanizadeh2018finite,hareedy2020channel}. Existing literature shows that the optimal $\mathbf{P}$ for an SC code with $m\leq 2$ typically has a balanced (uniform) edge distribution among component matrices \cite{esfahanizadeh2018finite}. However, in the remaining sections, we show that the edge distribution for optimal SC codes with large $m$ is not uniform, and we propose the GRADE-AO framework that explores a locally optimal solution.

\section{A Probabilistic Optimization Framework}
\label{section: framework}

In this section, we present a probabilistic framework that searches for a locally optimal edge distribution for the partitioning matrices of SC codes with given memories through the gradient descent algorithm. 

\begin{defi}\label{defi: coupling pattern} Let $\gamma,\kappa,m,m_t\in\mathbb{N}$ and $\mathbf{a}=\left(a_0,a_1,\dots,a_{m_t}\right)$, where $0=a_0<a_1<\cdots<a_{m_t}=m$. A $(\gamma,\kappa)$ SC code with memory $m$ is said to have \textbf{coupling pattern} $\mathbf{a}$ if and only if $\mathbf{H}^{\textup{P}}_i\neq\mathbf{0}^{\gamma\times \kappa}$, for all $i\in\{a_0,a_1,\dots,a_{m_t}\}$, and $\mathbf{H}^{\textup{P}}_i=\mathbf{0}^{\gamma\times \kappa}$, otherwise. The value $m_t$ is called the \textbf{pseudo-memory} of the SC code.
\end{defi}

\subsection{Probabilistic Metric}
\label{subsec: metric}
In this subsection, we define metrics linking the edge distribution with the expected number of cycle candidates in the protograph in \Cref{theo: cycle 6 probability} and \Cref{theo: cycle 8 probability}. While Schmalen et al. have shown in \cite{Schmalen} that nonuniform coupling (nonuniform edge distribution in our paper) yields an improved threshold, our work differs in two areas: 1) Explicit optimal coupling graphs were exhaustively searched and were restricted to small memories in \cite{Schmalen}, whereas our method produces near-optimal SC protographs for arbitrary memories. 2) Work \cite{Schmalen} focused on the asymptotic analysis for the threshold region, while our framework is dedicated to the finite-length construction and has additional demonstrable gains in the error floor region.

\begin{defi}\label{defi: coupling polynommial}
Let $m, m_t\in\mathbb{N}$ and $\mathbf{a}=\left(a_0,a_1,\dots,a_{m_t}\right)$, where $0=a_0<a_1<\cdots<a_{m_t}=m$. Let $\mathbf{p}=\left(p_0,p_1\dots,p_{m_t}\right)$, where $0<p_i\leq 1$, $p_0+p_1+\cdots+p_{m_t}=1$. Then, the following $f(X;\mathbf{a},\mathbf{p})$, which is abbreviated to $f(X)$ when the context is clear, is called the \textbf{coupling polynomial} of an SC code with coupling pattern $\mathbf{a}$, associated with probability distribution~$\mathbf{p}$:
\begin{equation}\label{eqn: coupling polynomial}
f(X;\mathbf{a},\mathbf{p})\triangleq \sum\nolimits_{0\leq i\leq m_t} p_i X^{a_i}.
\end{equation}
\end{defi}

\begin{theo}\label{theo: cycle 6 probability} Let $\left[\cdot\right]_i$ denote the coefficient of $X^{i}$ of a polynomial. Denote by $P_{6}(\mathbf{a},\mathbf{p})$ the probability of a cycle-$6$ candidate in the base matrix becoming a cycle-$6$ candidate in the protograph under random partitioning with edge distribution $\mathbf{p}$. Then,
\begin{equation} 
P_{6}(\mathbf{a},\mathbf{p})=\left[f^3(X)f^3(X^{-1})\right]_0.
\end{equation}
\end{theo}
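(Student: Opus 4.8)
The plan is to recast the purely combinatorial condition of \Cref{lemma: cycle condition} as a statement about a sum of independent random variables, and then extract the probability of interest as the constant term of a product of Laurent polynomials via probability generating functions. First I would fix a cycle-$6$ candidate $C=(j_1,i_1,j_2,i_2,j_3,i_3)$ in the all-one base matrix $\bm{\Pi}$. Because $C$ is a genuine cycle candidate, its row indices $i_1,i_2,i_3$ are pairwise distinct and its column indices $j_1,j_2,j_3$ are pairwise distinct; consequently the six positions $(i_1,j_1),(i_2,j_2),(i_3,j_3)$ and $(i_1,j_2),(i_2,j_3),(i_3,j_1)$ are six \emph{distinct} entries of $\mathbf{P}$. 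Under random partitioning with edge distribution $\mathbf{p}$, each entry of $\mathbf{P}$ is drawn independently and equals $a_\ell$ with probability $p_\ell$ for $0\le\ell\le m_t$, so these six entries are i.i.d.\ with that law.

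Next I would specialize \eqref{eqn: partition cycle} to $g=3$, which reads $S_L=S_R$ with $S_L=\mathbf{P}(i_1,j_1)+\mathbf{P}(i_2,j_2)+\mathbf{P}(i_3,j_3)$ and $S_R=\mathbf{P}(i_1,j_2)+\mathbf{P}(i_2,j_3)+\mathbf{P}(i_3,j_1)$. Since $S_L$ and $S_R$ are formed from two disjoint triples among the six independent entries, they are independent, and each is a sum of three i.i.d.\ copies of the single-entry distribution. The generating function of a single entry is exactly the coupling polynomial, $f(X)=\sum_{\ell}p_\ell X^{a_\ell}=\mathbb{E}\!\left[X^{\mathbf{P}(i,j)}\right]$, so by independence both $S_L$ and $S_R$ have generating function $f^3(X)$.

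The key step is then to compute $P_6(\mathbf{a},\mathbf{p})=\Pr[S_L=S_R]=\Pr[S_L-S_R=0]$. Setting $D=S_L-S_R$ and using that negation corresponds to $X\mapsto X^{-1}$ at the level of generating functions, I would obtain $\mathbb{E}\!\left[X^{D}\right]=\mathbb{E}\!\left[X^{S_L}\right]\mathbb{E}\!\left[X^{-S_R}\right]=f^3(X)f^3(X^{-1})$. The probability $\Pr[D=0]$ is precisely the coefficient of $X^0$ in this Laurent polynomial, which gives $P_6(\mathbf{a},\mathbf{p})=\left[f^3(X)f^3(X^{-1})\right]_0$, as claimed.

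I expect the only delicate point to be the bookkeeping that confirms the six entries are distinct, since the independence of $S_L$ and $S_R$ hinges on it; once that is in place, the generating-function computation (in particular the sign flip encoded by $X\mapsto X^{-1}$) is routine. A reassuring sanity check is that $f(X)f(X^{-1})$ is invariant under $X\mapsto X^{-1}$, so its constant term is well defined and real, matching its interpretation as a probability.
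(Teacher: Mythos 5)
Your proposal is correct and follows essentially the same route as the paper: both invoke \Cref{lemma: cycle condition} with $g=3$, use the independence of the six distinct entries of $\mathbf{P}$ involved in the candidate, and identify the probability of the sum condition holding as the constant coefficient of $f^3(X)f^3(X^{-1})$. Your phrasing via probability generating functions of $S_L$, $S_R$, and their difference is just a cleaner packaging of the paper's direct sum over tuples $(x_1,x_2,x_3,y_1,y_2,y_3)$ with $\sum x_k=\sum y_k$, and your explicit remark that the six positions are distinct (hence independent) makes precise a point the paper leaves implicit.
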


\begin{proof} According to \Cref{lemma: cycle condition}, suppose the cycle-$6$ candidate in the base matrix is represented by $C(j_1,i_1,j_2,i_2,j_3,i_3)$. Then,
\begin{equation*}\label{eqn: equation}
\begin{split}
&P_{6}(\mathbf{a},\mathbf{p})=\mathbb{P}\left[\sum\nolimits_{k=1}^{3}\mathbf{P}(i_{k},j_{k})=\sum\nolimits_{k=1}^{3}\mathbf{P}(i_{k},j_{k+1})\right]\\
=&\sum\limits_{\sum\nolimits_{k=1}^{3}x_i=\sum\nolimits_{k=1}^{3}y_i} \mathbb{P}\left[ \mathbf{P}(i_{k},j_{k})=x_k,\mathbf{P}(i_{k},j_{k+1})=y_k \right]\\
=&\sum\limits_{\sum\nolimits_{k=1}^{3}x_i=\sum\nolimits_{k=1}^{3}y_i} p_{x_1}p_{x_2}p_{x_3}p_{y_1}p_{y_2}p_{y_3}\\
=&\left[\sum\limits_{x_i,y_i\in \Tx{supp}(\mathbf{a})} p_{x_1}p_{x_2}p_{x_3}p_{y_1}p_{y_2}p_{y_3} X^{x_1+x_2+x_3-y_1-y_2-y_3}\right]_0\\
=&\left[f^3(X)f^3(X^{-1})\right]_0.\\
\end{split}
\end{equation*}
The theorem is proved.
\end{proof}

\begin{exam}\label{exam: SC probability} Consider SC codes with full memories and uniform partition, i.e., $\mathbf{a}=(0,1,\dots,m)$ and $\mathbf{p}=\frac{1}{m+1} \mathbf{1}_{m+1}$. When $m=2$, $P_6(\mathbf{a},\mathbf{p})=0.1934$; when $m=4$, $P_6(\mathbf{a},\mathbf{p})=0.1121$.
\end{exam}

\begin{exam}\label{exam: SC_m2_opt_edge_dist} First, consider SC codes with $m=m_t=2$. Let $\mathbf{a}_1=(0,1,2)$ and $\mathbf{p}_1=(2/5,1/5,2/5)$. According to \Cref{theo: cycle 6 probability}, $f(X)=(2+X+2X^2)/5$, $f^3(X)f^3(X^{-1})=0.0041(X^6+X^{-6})+0.0123(X^5+X^{-5})+0.0399(X^4+X^{-4})+0.0717(X^3+X^{-3})+0.1267(X^2+X^{-2})+0.1544(X+X^{-1})+0.1818$. Therefore, $P_6(\mathbf{a}_1,\mathbf{p}_1)=0.1818$. Second, consider SC codes with $m=m_t=4$. Let $\mathbf{a}_2=(0,1,2,3,4)$ and $\mathbf{p}_2=(0.31,0.13,0.12,0.13,0.31)$. According to \Cref{theo: cycle 6 probability}, $P_6(\mathbf{a}_2,\mathbf{p}_2)=0.0986$.
\end{exam}

\begin{figure}
\centering
\includegraphics[width=0.49\textwidth]{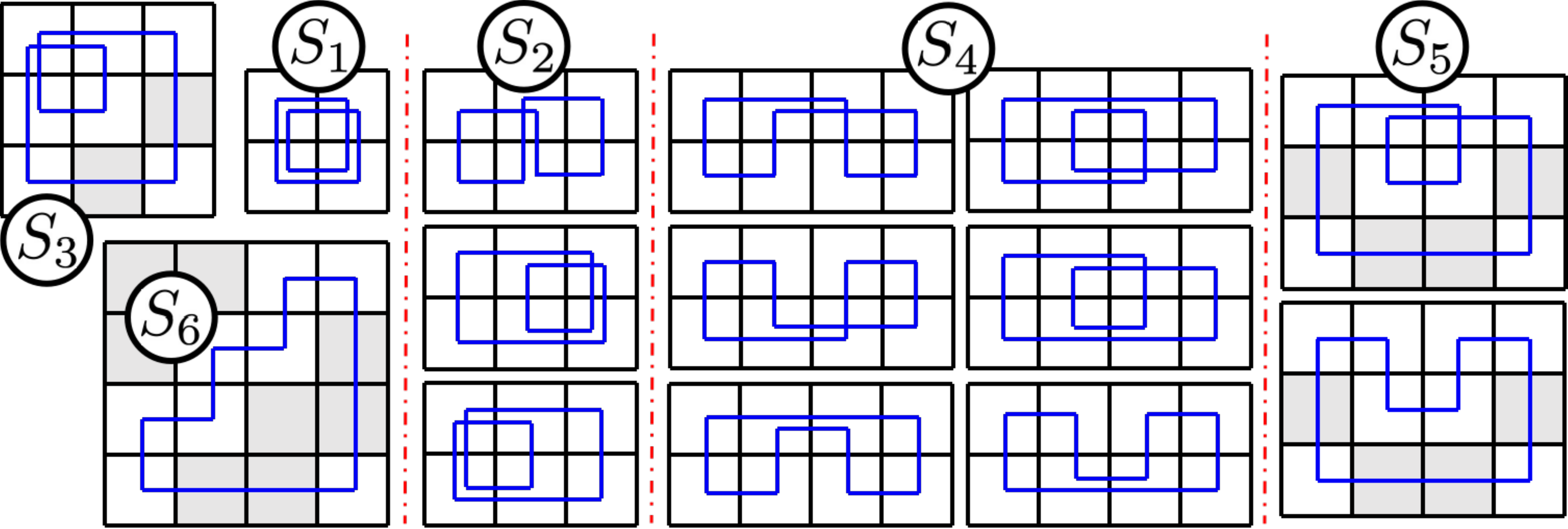}\vspace{-0.3em}
\caption{Structures and cycle candidates for cycle 8.}
\label{fig: cycle8}
\vspace{-0.2em}
\end{figure}

After we have derived the metric for cycle-$6$ candidates in the protograph, we now turn to the case of cycle-$8$ candidates. As shown in Fig.~\ref{fig: cycle8}, cycle candidates in the base matrix that result in cycle-$8$ candidates in the protograph can be categorized into $6$ different structures, labeled $S_1,\dots,S_6$. Different cases are differentiated by the number of rows and columns (without order) the structures span in the partitioning matrix \cite{hareedy2020channel}. Specifically, $S_1,\dots,S_6$ denote the structures that span submatrices of size $2\times 2$, $2\times 3$ or $3\times 2$, $3\times 3$, $2\times 4$ or $4\times 2$, $3\times 4$ or $4\times 3$, and $4\times 4$, respectively. Any structure that belongs to $S_2,S_4, S_5$ has multiple cycle-$8$ candidates, and these distinct candidates are marked by blue in Fig.~\ref{fig: cycle8}. 

\begin{lemma}\label{lemma: cycle 8 pattern probability} Denote $P_{8;i}(\mathbf{a},\mathbf{p})$, $1\leq i\leq 6$, as the average probability of a cycle-$8$ candidate of structure $S_i$ in the base matrix becoming a cycle-$8$ candidate in the protograph, under random partition with edge distribution $\mathbf{p}$. Then,
\begin{equation}\label{eqn: lemma cycle 8 pattern probability}
\begin{split}
\hspace{-0.5em}P_{8;1}(\mathbf{a},\mathbf{p})&=\left[f^2(X)f^2(X^{-1})\right]_0,\\
\hspace{-0.5em}P_{8;2}(\mathbf{a},\mathbf{p})&=\left[f(X^2)f(X^{-2})f^2(X)f^2(X^{-1})\right]_0,\\
\hspace{-0.5em}P_{8;3}(\mathbf{a},\mathbf{p})&=\left[f(X^2)f^2(X)f^4(X^{-1})\right]_0,\text{ and}\\
\hspace{-0.5em}P_{8;4}(\mathbf{a},\mathbf{p})&=P_{8;5}(\mathbf{a},\mathbf{p})=P_{8;6}(\mathbf{a},\mathbf{p})=\left[f^4(X)f^4(X^{-1})\right]_0. \nonumber
\end{split}
\end{equation}
\end{lemma}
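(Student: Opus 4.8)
The plan is to mirror the generating-function argument of \Cref{theo: cycle 6 probability}, now applied to the $g=4$ case of \Cref{lemma: cycle condition}, while tracking how the eight edges of each structure $S_i$ project onto the base matrix $\bm{\Pi}$. Under random partitioning each entry of $\mathbf{P}$ is an independent draw from $\mathbf{p}$, so two edges of a cycle-$8$ candidate carry independent partition values exactly when they sit at distinct positions of $\bm{\Pi}$, and carry identical values when they sit at the same position. Writing the condition of \Cref{lemma: cycle condition} as $\sum_{k=1}^{4}\mathbf{P}(i_k,j_k)-\sum_{k=1}^{4}\mathbf{P}(i_k,j_{k+1})=0$, I would assign the formal weight $X^{+\mathbf{P}}$ to each ``left'' edge $(i_k,j_k)$ and $X^{-\mathbf{P}}$ to each ``right'' edge $(i_k,j_{k+1})$; then $P_{8;i}$ equals the constant coefficient $[\,\cdot\,]_0$ of the expected total weight.

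The key bookkeeping rule is that a base position visited by a single edge contributes an independent factor $f(X)$ or $f(X^{-1})$, exactly as in \Cref{theo: cycle 6 probability}, whereas a position visited by $t$ edges of the \emph{same} side contributes $f(X^{\pm t})$, since the shared random value $\mathbf{P}(r,c)$ then appears with multiplicity $t$ and $\mathbb{E}\!\left[X^{\pm t\,\mathbf{P}(r,c)}\right]=f(X^{\pm t})$. With this rule in place I would enumerate the eight edge-positions of a representative candidate in each class, counting distinct rows and columns as in the $S_1,\dots,S_6$ labeling. For $S_4,S_5,S_6$ all eight edges land on distinct base positions, so the weight factorizes as $f^4(X)f^4(X^{-1})$, giving the common value. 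For $S_1$ the candidate winds twice around a $2\times 2$ block, so two distinct positions are each hit by two left edges and two by two right edges, yielding $f^2(X^2)f^2(X^{-2})$, whose constant term equals $[f^2(X)f^2(X^{-1})]_0$ via the identity $[g(X^2)]_0=[g]_0$. For $S_3$ one position is shared by two left edges while the remaining six edges occupy distinct positions, giving $f(X^2)f^2(X)f^4(X^{-1})$; for $S_2$ one left pair and one right pair coincide, giving $f(X^2)f(X^{-2})f^2(X)f^2(X^{-1})$.

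Two points need care and together form the main obstacle. First, within a structure there can be several labeled candidates (the blue-marked ones in Fig.~\ref{fig: cycle8}), and a position-doubling may fall on the left or on the right; since $[g(X)]_0=[g(X^{-1})]_0$, all such reflections share the same constant term, so the average over candidates of a fixed $S_i$ equals the single stated value. Second, one must verify that a genuine cycle-$8$ candidate never has a left edge and a right edge collapse onto the same base position: such a coincidence would cancel an $X^{+\mathbf{P}}X^{-\mathbf{P}}$ pair and reduce the expression to the cycle-$6$ form $f^3(X)f^3(X^{-1})$, signaling that the configuration actually carries a shorter cycle and is therefore excluded from the $S_i$ enumeration. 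Concretely, this amounts to showing that the doubled row (respectively column) in each reduced structure joins \emph{non-consecutive} checks (respectively variables) along the cycle, so that only same-side pairings occur. Establishing this admissibility restriction is the delicate combinatorial step; once it is settled, collecting the factors above yields the stated formulas and proves the lemma.
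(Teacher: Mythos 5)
Your proposal is correct and follows essentially the same route as the paper: it reduces each structure $S_i$ to its cycle condition from \Cref{lemma: cycle condition} with position multiplicities and then applies the generating-function argument of \Cref{theo: cycle 6 probability}, with a position of multiplicity $t$ contributing a factor $f(X^{\pm t})$ — your resulting factors match the paper's explicit conditions for $S_1$, $S_2$, $S_3$ (your $\left[f^2(X^2)f^2(X^{-2})\right]_0=\left[f^2(X)f^2(X^{-1})\right]_0$ step for $S_1$ is just the paper's division of the doubled condition by $2$). The admissibility point you flag as delicate is not actually carried out in the paper either; it is absorbed into the enumeration of candidates in Fig.~\ref{fig: cycle8} inherited from prior work, so your treatment is at the same level of rigor.
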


\begin{proof} For patterns where the nodes of the cycle-$8$ candidates are pairwise different, namely, $S_4,S_5,S_6$, the result can be derived by following the logic in the proof of \Cref{theo: cycle 6 probability}. 

For $S_1$, suppose the indices of the rows and columns are $i_1,i_2$, and $j_1,j_2$, respectively. Then, the cycle condition in \Cref{lemma: cycle condition} is $\mathbf{P}(i_1,j_1)+\mathbf{P}(i_2,j_2)=\mathbf{P}(i_1,j_2)+\mathbf{P}(i_2,j_1)$. 

For $S_2$, suppose the indices of the rows and columns are $i_1,i_2$, and $j_1,j_2,j_3$, respectively. Then, the cycle condition in \Cref{lemma: cycle condition} is $2\mathbf{P}(i_1,j_1)-2\mathbf{P}(i_2,j_1)+\mathbf{P}(i_2,j_2)+\mathbf{P}(i_2,j_3)-\mathbf{P}(i_1,j_2)-\mathbf{P}(i_1,j_3)=0$. 

For $S_3$, suppose the indices of the rows and columns are $i_1,i_2,i_3$, and $j_1,j_2,j_3$, respectively. Then, the cycle condition in \Cref{lemma: cycle condition} is $2\mathbf{P}(i_1,j_1)+\mathbf{P}(i_2,j_2)+\mathbf{P}(i_3,j_3)-\mathbf{P}(i_1,j_2)-\mathbf{P}(i_2,j_1)-\mathbf{P}(i_1,j_3)-\mathbf{P}(i_3,j_1)=0$. 

Following the logic in the proof of \Cref{theo: cycle 6 probability}, the case for $S_1,S_2,S_3$ can be proved.
\end{proof}

\begin{theo}\label{theo: cycle 8 probability} Denote $N_8(\mathbf{a},\mathbf{p})$ as the expectation of the number of cycle-$8$ candidates in the protograph. Then,
\begin{equation}\label{eqn: lemma cycle 8 probability}
\begin{split}
N_{8}(\mathbf{a},\mathbf{p})&=w_1\left[f^2(X)f^2(X^{-1})\right]_0\\
&+w_2\left[f(X^2)f(X^{-2})f^2(X)f^2(X^{-1})\right]_0\\
&+w_3\left[f(X^2)f^2(X)f^4(X^{-1})\right]_0\\
&+w_4\left[f^4(X)f^4(X^{-1})\right]_0,\\
\end{split}
\end{equation}
where $w_1=\binom{\gamma}{2}\binom{\kappa}{2}$, $w_2=3\binom{\gamma}{2}\binom{\kappa}{3}+3\binom{\gamma}{3}\binom{\kappa}{2}$, $w_3=18\binom{\gamma}{3}\binom{\kappa}{3}$, $w_4=6\binom{\gamma}{2}\binom{\kappa}{4}+6\binom{\gamma}{4}\binom{\kappa}{2}+36\binom{\gamma}{3}\binom{\kappa}{4}+36\binom{\gamma}{4}\binom{\kappa}{3}+24\binom{\gamma}{4}\binom{\kappa}{4}$.
%where $w_1=\binom{\gamma}{2}\binom{\kappa}{2}$, $w_2=3\left[\binom{\gamma}{2}\binom{\kappa}{3}+\binom{\gamma}{3}\binom{\kappa}{2}\right]$, $w_3=18\binom{\gamma}{3}\binom{\kappa}{3}$, $w_4=6\left[\binom{\gamma}{2}\binom{\kappa}{4}+\binom{\gamma}{4}\binom{\kappa}{2}\right]+36\left[\binom{\gamma}{3}\binom{\kappa}{4}+\binom{\gamma}{4}\binom{\kappa}{3}\right]+24\binom{\gamma}{4}\binom{\kappa}{4}$.
\end{theo}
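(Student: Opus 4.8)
The plan is to compute the expectation $N_8(\mathbf{a},\mathbf{p})$ by decomposing it over the six structure types $S_1,\dots,S_6$ and invoking linearity of expectation. For each structure type $S_i$, I would first count the number of cycle-$8$ candidates in the base matrix $\bm{\Pi}$ that belong to that type, and then multiply by the per-candidate probability $P_{8;i}(\mathbf{a},\mathbf{p})$ already established in \Cref{lemma: cycle 8 pattern probability}. Since $\bm{\Pi}$ is the all-one $\gamma\times\kappa$ matrix, every choice of the required rows and columns yields a valid candidate, so the counting reduces to pure combinatorics: choose the row set and column set spanned by the structure, then count how many distinct cycle-$8$ candidates sit on that chosen submatrix. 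The final formula then assembles as $N_8=\sum_{i=1}^{6}(\text{count}_i)\cdot P_{8;i}$, and grouping the four distinct probability expressions gives the coefficients $w_1,w_2,w_3,w_4$.

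The bulk of the work is the enumeration, which I would organize by the submatrix dimensions listed before the lemma. For $S_1$ (a $2\times2$ span) there is exactly one cycle-$8$ candidate per choice of $2$ rows and $2$ columns, contributing $\binom{\gamma}{2}\binom{\kappa}{2}$, which is $w_1$. For $S_2$ (spanning $2\times3$ or $3\times2$), I would count the candidates on each oriented submatrix; the factor $3$ and the sum $\binom{\gamma}{2}\binom{\kappa}{3}+\binom{\gamma}{3}\binom{\kappa}{2}$ reflect the multiplicity of cycle-$8$ candidates per submatrix (marked blue in Fig.~\ref{fig: cycle8}) together with the two orientations. For $S_3$ ($3\times3$), the multiplier $18$ accumulates all distinct closed traversals on a $3\times3$ all-one submatrix. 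Finally, $S_4,S_5,S_6$ all share the same probability $\left[f^4(X)f^4(X^{-1})\right]_0$ because their nodes are pairwise distinct, so their counts are pooled into the single weight $w_4$, covering the $2\times4$/$4\times2$ spans, the $3\times4$/$4\times3$ spans, and the $4\times4$ span, with the respective multiplicities $6$, $36$, and $24$ attached to the appropriate binomial products.

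The main obstacle is the accurate enumeration of distinct cycle-$8$ candidates on each fixed submatrix, especially distinguishing candidates that reuse rows or columns (as in $S_2,S_3$) from those with pairwise-distinct nodes (as in $S_4,S_5,S_6$). A cycle-$8$ candidate is a closed alternating walk of length $8$, and on a small all-one submatrix the number of such walks up to the natural equivalence (rotation and reflection of the traversal) must be counted carefully so as not to over- or under-count; this is where the constants $1,3,18,6,36,24$ come from and where errors are easiest to make. I would cross-check each multiplicity against the candidate pictures in Fig.~\ref{fig: cycle8} and, as a sanity check, confirm that the per-structure probability attached to each count matches the corresponding expression in \Cref{lemma: cycle 8 pattern probability}. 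Once the counts are verified, collecting terms with identical probability generating-function expressions yields \eqref{eqn: lemma cycle 8 probability} directly.
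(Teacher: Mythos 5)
Your proposal matches the paper's proof essentially exactly: both decompose the expectation by structure type via linearity of expectation, count candidates as (number of spanned submatrices) $\times$ (candidates per submatrix), attach the per-structure probabilities from \Cref{lemma: cycle 8 pattern probability}, and pool $S_4,S_5,S_6$ into $w_4$ since they share the probability $\left[f^4(X)f^4(X^{-1})\right]_0$. The paper likewise leaves most of the per-submatrix multiplicities to inspection (it only works out $S_5$, as $3\cdot\binom{4}{2}\cdot 2=36$ per $3\times4$ submatrix), so your plan is complete at the same level of detail.
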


\begin{proof} Provided the results in \Cref{lemma: cycle 8 pattern probability}, we just need to prove that the numbers of cycle candidates of structures $S_1,S_2,\dots,S_6$ in a $\gamma\times \kappa$ base matrix are $\binom{\gamma}{2}\binom{\kappa}{2}$, $3\binom{\gamma}{2}\binom{\kappa}{3}+3\binom{\gamma}{3}\binom{\kappa}{2}$, $18\binom{\gamma}{3}\binom{\kappa}{3}$, $6\binom{\gamma}{2}\binom{\kappa}{4}+6\binom{\gamma}{4}\binom{\kappa}{2}$, $36\binom{\gamma}{3}\binom{\kappa}{4}+36\binom{\gamma}{4}\binom{\kappa}{3}$, and $24\binom{\gamma}{4}\binom{\kappa}{4}$, respectively. 

Take $i=5$ as an example, the number of cycle candidates of structure $S_5$ in any $3\times 4$ or $4\times 3$ matrix is $3\cdot \binom{4}{2} \cdot 2=36$. The total number of $3\times 4$ or $4\times 3$ matrices in a $\gamma\times \kappa$ base matrix is $\binom{\gamma}{3}\binom{\kappa}{4}+\binom{\gamma}{4}\binom{\kappa}{3}$. Therefore, the total number of cycles of pattern $S_5$ is $36\binom{\gamma}{3}\binom{\kappa}{4}+36\binom{\gamma}{4}\binom{\kappa}{3}$. By a similar logic, we can prove the result for the remaining patterns.
\end{proof}

\subsection{Gradient Descent Distributor}
\label{subsec: gradient descent}
By contrasting Examples \ref{exam: SC probability} with \ref{exam: SC_m2_opt_edge_dist} it is clear that for a given coupling pattern, an optimal edge distribution is not necessarily reached by a uniform partition. In this subsection, we develop an algorithm that obtains a locally optimal distribution by gradient descent.

\begin{lemma}\label{lemma: cycle 6 distribution} Given $m_t\in\mathbb{N}$ and $\mathbf{a}=(a_0,a_1,\dots,a_{m_t})$, a necessary condition for $P_6(\mathbf{a},\mathbf{p})$ to reach its minimum value is that the following equation holds for some $c_0\in\mathbb{R}$:
\begin{equation}\label{eqn: cycle 6 condition}
\left[f^3(X)f^2(X^{-1})\right]_{a_i}=c_0,\ \forall i, 0\leq i\leq m_t.
\end{equation}
\end{lemma}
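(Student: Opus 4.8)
The plan is to treat the minimization of $P_6(\mathbf{a},\mathbf{p})=\left[f^3(X)f^3(X^{-1})\right]_0$ as a constrained optimization problem over the probability vector $\mathbf{p}$, subject to the single equality constraint $\sum_{i=0}^{m_t}p_i=1$ together with the positivity constraints $p_i>0$. Because \Cref{defi: coupling polynommial} requires $p_i>0$ for every $i$, any minimizer lies in the relative interior of the simplex, so the positivity constraints are inactive and the first-order (Lagrange) condition reduces to demanding that the gradient of $P_6$ be parallel to the all-ones vector; equivalently, that $\partial P_6/\partial p_i$ take the same value for every $i$. The whole lemma is then just the explicit form of this stationarity condition.

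First I would compute $\partial P_6/\partial p_i$. Since $f(X)=\sum_j p_j X^{a_j}$, we have $\partial f(X)/\partial p_i=X^{a_i}$ and $\partial f(X^{-1})/\partial p_i=X^{-a_i}$, and coefficient extraction $[\cdot]_0$ is linear, so differentiating through the product gives
\[
\frac{\partial P_6}{\partial p_i}=3\left[X^{a_i}f^2(X)f^3(X^{-1})\right]_0+3\left[X^{-a_i}f^3(X)f^2(X^{-1})\right]_0.
\]
Next I would collapse each $[\cdot]_0$ into a single coefficient using the elementary identities $\left[X^{-a_i}g(X)\right]_0=[g(X)]_{a_i}$ and $\left[X^{a_i}g(X)\right]_0=[g(X)]_{-a_i}$, valid for any Laurent polynomial $g$. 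This turns the two summands into $3\left[f^2(X)f^3(X^{-1})\right]_{-a_i}$ and $3\left[f^3(X)f^2(X^{-1})\right]_{a_i}$.

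The key step, and the one that needs the most care, is showing these two terms coincide. Setting $h(X)=f^3(X)f^2(X^{-1})$, I would observe that $h(X^{-1})=f^2(X)f^3(X^{-1})$ and invoke the reflection identity $\left[h(X^{-1})\right]_{-a_i}=[h(X)]_{a_i}$ (the coefficient of $X^{-a_i}$ in $h(X^{-1})$ equals the coefficient of $X^{a_i}$ in $h(X)$). Hence both summands equal $\left[f^3(X)f^2(X^{-1})\right]_{a_i}$, giving the clean expression
\[
\frac{\partial P_6}{\partial p_i}=6\left[f^3(X)f^2(X^{-1})\right]_{a_i}.
\]

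Finally, I would apply the Lagrange condition from the first paragraph: at a minimizer there exists $\lambda\in\mathbb{R}$ with $\partial P_6/\partial p_i=\lambda$ for all $i$, whence $\left[f^3(X)f^2(X^{-1})\right]_{a_i}=\lambda/6=:c_0$ is independent of $i$, which is exactly condition (\ref{eqn: cycle 6 condition}). The main obstacle is purely the bookkeeping in the coefficient-extraction and reflection manipulations that combine the two derivative terms; the remainder is a routine chain-rule computation together with standard Lagrange-multiplier reasoning on the open simplex.
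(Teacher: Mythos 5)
Your proposal is correct and follows essentially the same route as the paper: form the Lagrangian $P_6(\mathbf{a},\mathbf{p})+c(1-\sum_i p_i)$, differentiate the product $f^3(X)f^3(X^{-1})$ in each $p_i$, merge the two symmetric terms into $6\left[f^3(X)f^2(X^{-1})\right]_{a_i}$, and read off the stationarity condition with $c_0=c/6$. The only difference is that you spell out the coefficient-shift and reflection identities (and the interiority of the minimizer) that the paper leaves implicit, which is a welcome but not substantive addition.
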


\begin{proof} Consider the gradient of $L_6(\mathbf{a},\mathbf{p})=P_6(\mathbf{a},\mathbf{p})+c(1-p_{0}-p_{1}-\dots-p_{{m_t}})$. 
\begin{equation}\label{eqn: lemma cycle 6 gradient}
\begin{split}
&\nabla_{\mathbf{p}}L_6(\mathbf{a},\mathbf{p})\\
=&\nabla_{\mathbf{p}}\left(P_6(\mathbf{a},\mathbf{p})+c(1-p_{0}-p_{1}-\dots-p_{{m_t}})\right)\\
=&\nabla_{\mathbf{p}}\left[f^3(X)f^3(X^{-1})\right]_0-c\mathbf{1}_{m_t+1}\\
=&\left[\nabla_{\mathbf{p}}\left(f^3(X)f^3(X^{-1})\right)\right]_0-c\mathbf{1}_{m_t+1}\\
=&3\left[f^2(X)f^2(X^{-1})f(X)\nabla_{\mathbf{p}}f(X^{-1})\right]_0\\
&+3\left[f^2(X)f^2(X^{-1})f(X^{-1})\nabla_{\mathbf{p}}f(X)\right]_0-c\mathbf{1}_{m_t+1}\\
=&6\left[f^3(X)f^2(X^{-1})\left(X^{-a_{0}},X^{-a_{1}},\dots,X^{-a_{m_t}}\right)\right]_0-c\mathbf{1}_{m_t+1}.
\end{split}
\end{equation}
When $P_8(\mathbf{a},\mathbf{p})$ reaches its minimum, $\nabla_{\mathbf{p}}\left[L(\mathbf{a},\mathbf{p})\right]=\mathbf{0}_{m_t+1}$, which is equivalent to (\ref{eqn: cycle 6 condition}) by defining $c_0=c/6$.
\end{proof}

\begin{lemma}\label{lemma: cycle 8 distribution} Given $\gamma,\kappa,m_t\in\mathbb{N}$ and $\mathbf{a}=(a_0,a_1,\dots,a_{m_t})$, a necessary condition for $N_8(\mathbf{a},\mathbf{p})$ to reach its minimum value is that the following equation holds for some $c_0\in\mathbb{R}$:
\begin{equation*}\label{eqn: cycle 8 condition}
\begin{split}
&\left[4f^2(X)f(X^{-1})\right])_{a_i}+\bar{w}_2\left[2f(X^{2})f^2(X)f^2(X^{-1})\right]_{2a_i}\\
+&\bar{w}_2\left[4f(X^2)f(X^{-2})f^2(X)f(X^{-1})\right]_{a_i}\\
+&\bar{w}_3\left[f^2(X)f^4(X^{-1})\right]_{-2a_i}+\bar{w}_3\left[2f(X^2)f(X)f^4(X^{-1})\right]_{-a_i}\\
+&\bar{w}_3\left[4f(X^2)f^2(X)f^3(X^{-1})\right]_{a_i}\\
+&\bar{w}_4\left[8f^4(X)f^3(X^{-1})\right]_{a_i}=c_0,\ \forall i, 0\leq i\leq m_t,
\end{split}
\end{equation*}
where $\bar{w}_2=\gamma+\kappa-4$, $\bar{w}_3=2(\gamma-2)(\kappa-2)$, and $\bar{w}_4=\frac{1}{2}\left[(\gamma-2)(\gamma-3)+(\kappa-2)(\kappa-3)\right]+(\gamma-2)(\kappa-2)(\gamma+\kappa-6)+\frac{1}{6}(\gamma-2)(\gamma-3)(\kappa-2)(\kappa-3)$.
\end{lemma}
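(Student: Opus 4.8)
The plan is to mirror the Lagrange-multiplier argument used in the proof of \Cref{lemma: cycle 6 distribution}, now applied to the four-term expression for $N_8(\mathbf{a},\mathbf{p})$ supplied by \Cref{theo: cycle 8 probability}. First I would form the Lagrangian $L_8(\mathbf{a},\mathbf{p})=N_8(\mathbf{a},\mathbf{p})+c(1-p_0-p_1-\cdots-p_{m_t})$ to encode the normalization $\sum_i p_i=1$, and observe that any interior minimizer must satisfy the first-order condition $\nabla_{\mathbf{p}}L_8=\mathbf{0}_{m_t+1}$, i.e. every component of $\nabla_{\mathbf{p}}N_8$ must equal the common value $c$. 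The entire task then reduces to computing $\nabla_{\mathbf{p}}N_8$ componentwise.

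The computational engine is the identity $\nabla_{\mathbf{p}}f(X^k)=(X^{ka_0},X^{ka_1},\dots,X^{ka_{m_t}})$, valid for each exponent $k\in\{1,-1,2,-2\}$ occurring in $N_8$, together with the elementary coefficient rule $[g(X)X^{b}]_0=[g(X)]_{-b}$, which moves each differentiated monomial into the coefficient index. Applying the product rule to the four constituent terms $[f^2(X)f^2(X^{-1})]_0$, $[f(X^2)f(X^{-2})f^2(X)f^2(X^{-1})]_0$, $[f(X^2)f^2(X)f^4(X^{-1})]_0$, and $[f^4(X)f^4(X^{-1})]_0$, and then extracting coefficients via that rule, reproduces exactly the coefficient expressions claimed in the statement. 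Here the chain rule must be applied to \emph{every} factor of each product, so the quartic terms generate several monomials apiece; I would organize this factor by factor to avoid dropping contributions.

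Two simplifications finish the computation. First, the substitution $X\mapsto X^{-1}$ sends $[\cdot]_{-a_i}\mapsto[\cdot]_{a_i}$ and swaps $f(X)\leftrightarrow f(X^{-1})$, so pairs of terms coming from the palindromic products collapse: term one yields $4[f^2(X)f(X^{-1})]_{a_i}$, term two yields $2[f(X^2)f^2(X)f^2(X^{-1})]_{2a_i}+4[f(X^2)f(X^{-2})f^2(X)f(X^{-1})]_{a_i}$, and term four yields $8[f^4(X)f^3(X^{-1})]_{a_i}$. The asymmetric third product admits no such collapse and therefore contributes its three distinct pieces $[f^2(X)f^4(X^{-1})]_{-2a_i}$, $2[f(X^2)f(X)f^4(X^{-1})]_{-a_i}$, and $4[f(X^2)f^2(X)f^3(X^{-1})]_{a_i}$. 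Second, I would factor the weight $w_1=\binom{\gamma}{2}\binom{\kappa}{2}$ out of the whole gradient; this normalizes the first term to coefficient $1$ and reduces the remaining weights to the ratios $\bar w_2=w_2/w_1$, $\bar w_3=w_3/w_1$, $\bar w_4=w_4/w_1$, with $c_0=c/w_1$. A short computation with $\binom{n}{r}/\binom{n}{2}$ identities (e.g. $\binom{\kappa}{3}/\binom{\kappa}{2}=(\kappa-2)/3$, $\binom{\kappa}{4}/\binom{\kappa}{2}=(\kappa-2)(\kappa-3)/12$) confirms $\bar w_2=\gamma+\kappa-4$ and $\bar w_3=2(\gamma-2)(\kappa-2)$, and, after combining the two mixed cross-terms as $(\gamma-2)(\kappa-2)[(\kappa-3)+(\gamma-3)]$, gives the displayed quartic $\bar w_4$.

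The argument is essentially bookkeeping, so the obstacle is combinatorial care rather than conceptual depth. The two points where an error is easiest to make are (i) applying the $X\mapsto X^{-1}$ symmetry only to genuinely palindromic products — the third term is the trap, since combining its pieces would be incorrect — and (ii) carrying out the binomial reduction for $\bar w_4$ without dropping one of its three cross-contributions. Once these are handled, setting the normalized gradient equal to $c_0$ yields precisely the stated necessary condition.
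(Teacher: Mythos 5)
Your proposal is correct and follows essentially the same route as the paper: form the Lagrangian $L_8=N_8+c(1-\sum_i p_i)$, compute $\nabla_{\mathbf{p}}N_8$ factor by factor via the product rule and the identity $\nabla_{\mathbf{p}}f(X^k)=(X^{ka_0},\dots,X^{ka_{m_t}})$, collapse symmetric contributions, and normalize by $w_1$ so that $\bar w_i=w_i/w_1$ and $c_0=c/w_1$. Your write-up is in fact more explicit than the paper's (which states the collapsed gradient directly and omits the binomial-ratio verification of the $\bar w_i$), but there is no substantive difference in method.
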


\begin{proof} Consider the gradient of $L_8(\mathbf{a},\mathbf{p})=N_8(\mathbf{a},\mathbf{p})+c(1-p_{0}-p_{1}-\dots-p_{{m_t}})$. 
\begin{equation}\label{eqn: lemma cycle 8 gradient}
\begin{split}
&\nabla_{\mathbf{p}}L_8 (\mathbf{a},\mathbf{p})\\
=&\nabla_{\mathbf{p}}\left(N_8(\mathbf{a},\mathbf{p})+c(1-p_{0}-p_{1}-\dots-p_{{m_t}})\right)\\
=&w_1\left[\nabla_{\mathbf{p}}\left(f^2(X)f^2(X^{-1})\right)\right]_0\\
&+w_2\left[\nabla_{\mathbf{p}}\left(f(X^2)f(X^{-2})f^2(X)f^2(X^{-1})\right)\right]_0\\
&+w_3\left[\nabla_{\mathbf{p}}\left(f(X^2)f^2(X)f^4(X^{-1})\right)\right]_0\\
&+w_4\left[\nabla_{\mathbf{p}}\left(f^4(X)f^4(X^{-1})\right)\right]_0-c\mathbf{1}_{m_t+1}\\
=&w_1\{\left[4f^2(X)f(X^{-1})(X^{-a_0},X^{-a_1},\dots,X^{-a_{m_t}})\right])_0\\
&+\bar{w}_2\left[2f(X^2)f^2(X)f^2(X^{-1})(X^{-2a_0},\dots,X^{-2a_{m_t}})\right]_0\\
&+\bar{w}_2\left[4f(X^2)f(X^{-2})f^2(X)f(X^{-1})(X^{-a_0},\dots,X^{-a_{m_t}})\right]_0\\
&+\bar{w}_3\left[f^2(X)f^4(X^{-1})(X^{2a_0},X^{2a_1},\dots,X^{2a_{m_t}})\right]_0\\
&+\bar{w}_3\left[2f(X^2)f(X)f^4(X^{-1})(X^{a_0},X^{a_1},\dots,X^{a_{m_t}})\right]_0\\
&+\bar{w}_3\left[4f(X^2)f^2(X)f^3(X^{-1})(X^{-a_0},X^{-a_1},\dots,X^{-a_{m_t}})\right]_0\\
&+\bar{w}_4\left[8f^4(X)f^3(X^{-1})(X^{-a_0},X^{-a_1},\dots,X^{-a_{m_t}})\right]_0\}\\
&-c\mathbf{1}_{m_t+1}.
\end{split}
\end{equation}
When $P_8(\mathbf{a},\mathbf{p})$ reaches its minimum, $\nabla_{\mathbf{p}}\left[L(\mathbf{a},\mathbf{p})\right]=\mathbf{0}_{m_t+1}$, which is equivalent to (\ref{eqn: cycle 8 condition}) by defining $c_0=c/w_1$.
\end{proof}

Based on \Cref{lemma: cycle 6 distribution} and \Cref{lemma: cycle 8 distribution}, we adopt the gradient descent algorithm to obtain a locally optimal edge distribution for SC codes with coupling pattern $\mathbf{a}$, starting from the uniform distribution inside $\mathbf{P}$ as presented in \Cref{algo: GRADE}. Note that $\Tx{conv}(\cdot)$ and $\Tx{inverse}(\cdot)$ refer to convolution and reverse of vectors, respectively.

\begin{algorithm}
\caption{Gradient Descent Distributor (GRADE)} \label{algo: GRADE}
\begin{algorithmic}[1]
\Require 
\Statex $\gamma,\kappa,m_t,m,\mathbf{a}$: parameters of the SC code;
\Statex $w$: weight of each cycle-$6$ candidate;
\Statex $\epsilon,\alpha$: accuracy and step size of gradient descent;
\Ensure
\Statex $\mathbf{p}$: a locally optimal edge distribution over $\Tx{supp}(\mathbf{a})$;
\State $\bar{w}_1\gets \frac{2w}{3}(\gamma-2)(\kappa-2)$, obtain $\{\bar{w}_i\}_{i=2}^4$ in \Cref{lemma: cycle 8 distribution}; %$\bar{w}_2\gets\gamma+\kappa-4$, $\bar{w}_3\gets 2(\gamma-2)(\kappa-2)$, $\bar{w}_4\frac{1}{2}\left[(\gamma-2)(\gamma-3)+(\kappa-2)(\kappa-3)\right]+(\gamma-2)(\kappa-2)(\gamma+\kappa-6)+\frac{1}{6}(\gamma-2)(\gamma-3)(\kappa-2)(\kappa-3)$;
\State $v_{prev}=1$; $v_{cur}=1$;
\State $\mathbf{p},\mathbf{g}\gets \mathbf{0}_{m_t+1}$, $\mathbf{f},\bar{\mathbf{f}}\gets \mathbf{0}_{m+1}$, $\mathbf{f}_2,\bar{\mathbf{f}}_2\gets \mathbf{0}_{2m+1}$;
\State $\mathbf{p}\gets \frac{1}{m_t+1} \mathbf{1}_{m_t+1}$;
\State $\mathbf{f}\left[a_0,\dots,a_{m_t}\right]\gets\mathbf{p}$, $\bar{\mathbf{f}}\gets \Tx{inverse}(\mathbf{f})$;
\State $\mathbf{f}_2\left[1,3,\dots,2m+1\right]\gets\mathbf{f}$, $\bar{\mathbf{f}}_2\gets \Tx{inverse}(\mathbf{f}_2)$;
\State $\mathbf{q}_1 \gets \bar{w}_1\Tx{conv}(\mathbf{f},\mathbf{f},\mathbf{f},\bar{\mathbf{f}},\bar{\mathbf{f}},\bar{\mathbf{f}})$, $\mathbf{q}_2\gets \Tx{conv}(\mathbf{f},\mathbf{f},\bar{\mathbf{f}},\bar{\mathbf{f}})$;
\State $\mathbf{q}_3\gets \bar{w}_2\Tx{conv}(\mathbf{f}_2,\bar{\mathbf{f}}_2,\mathbf{f},\mathbf{f},\bar{\mathbf{f}},\bar{\mathbf{f}})+\bar{w}_3\Tx{conv}(\mathbf{f}_2,\mathbf{f},\mathbf{f},\bar{\mathbf{f}},\bar{\mathbf{f}},\bar{\mathbf{f}},\bar{\mathbf{f}})$
\Statex \hspace{13pt}$+\bar{w}_4\Tx{conv}(\mathbf{f},\mathbf{f},\mathbf{f},\mathbf{f},\bar{\mathbf{f}},\bar{\mathbf{f}},\bar{\mathbf{f}},\bar{\mathbf{f}})$;
\State $v_{prev}=v_{cur}$, $v_{cur}=\mathbf{q}_1\left[3m\right]+\mathbf{q}_2\left[2m\right]+\mathbf{q}_3\left[4m\right]$;
\State $\mathbf{g}_1\gets 6\bar{w}_1\Tx{conv}(\mathbf{f},\mathbf{f},\mathbf{f},\bar{\mathbf{f}},\bar{\mathbf{f}})$, $\mathbf{g}_2\gets 4\Tx{conv}(\mathbf{f},\mathbf{f},\bar{\mathbf{f}})$;
\State $\mathbf{g}_3\gets 4\bar{w}_2\Tx{conv}(\mathbf{f}_2,\bar{\mathbf{f}}_2,\mathbf{f},\mathbf{f},\bar{\mathbf{f}})+2\bar{w}_3\Tx{conv}(\bar{\mathbf{f}}_2,\mathbf{f},\mathbf{f},\mathbf{f},\mathbf{f},\bar{\mathbf{f}})$
\Statex \hspace{13pt}$+4\bar{w}_3\Tx{conv}(\mathbf{f}_2,\mathbf{f},\mathbf{f},\bar{\mathbf{f}},\bar{\mathbf{f}},\bar{\mathbf{f}})+8\bar{w}_4\Tx{conv}(\mathbf{f},\mathbf{f},\mathbf{f},\mathbf{f},\bar{\mathbf{f}},\bar{\mathbf{f}},\bar{\mathbf{f}})$;
\State $\mathbf{g}_4\gets 2\bar{w}_2\Tx{conv}(\mathbf{f}_2,\mathbf{f},\mathbf{f},\bar{\mathbf{f}},\bar{\mathbf{f}})+\bar{w}_3\Tx{conv}(\mathbf{f},\mathbf{f},\mathbf{f},\mathbf{f},\bar{\mathbf{f}},\bar{\mathbf{f}})$;
\State $\mathbf{g}\gets \mathbf{g}_1\left[2m+\mathbf{a}\right]+\mathbf{g}_2\left[m+\mathbf{a}\right]+\mathbf{g}_3\left[3m+\mathbf{a}\right]$
\Statex \hspace{13pt}$+\mathbf{g}_4\left[2m+2\mathbf{a}\right]$, $\mathbf{g}\gets \mathbf{g}-\Tx{mean}(\mathbf{g})$;
\If{$|v_{prev}-v_{cur}|>\epsilon$} 
\State $\mathbf{p}\gets \mathbf{p}-\alpha\frac{\mathbf{g}}{||\mathbf{g}||}$;
\State \textbf{goto} step 5;
\EndIf
\State \textbf{return} $\mathbf{p}$;
\end{algorithmic}
\end{algorithm}

\section{Construction}
\label{section: construction}

In this section, we present two algorithmic optimization methods based on GRADE to obtain locally optimal SC codes with a fixed coupling pattern.

\subsection{Gradient Descent Codes}
\label{subsec: GD}

In this subsection, we consider SC codes with full memories, i.e., $m=m_t$ and $\mathbf{a}=(0,1,\dots,m)$. In this case, our proposed GRADE algorithm obtains a highly skewed edge distribution. Starting from a random partitioning matrix $\mathbf{P}$ with the derived distribution, one can perform a semi-greedy algorithm that searches for the locally optimal partitioning matrix near the initial $\mathbf{P}$. Constraining the search space to contain $\mathbf{P}$'s that have distributions within small $L_1$ and $L_\infty$ distances from that of the original $\mathbf{P}$, and adopting the CPO next, significantly reduces the computational complexity to find a strong high memory code. This procedure is given in \Cref{algo: AO}.

\begin{algorithm}
\caption{GRADE-A Optimizer (AO)} \label{algo: AO}
\begin{algorithmic}[1]
\Require 
\Statex $\gamma,\kappa,m$: parameters of an SC code with full memory;
\Statex $\mathbf{p}$: edge distribution obtained from \Cref{algo: GRADE};
\Statex $d_1$, $d_2$: parameters indicating the size of the searching space;
\Ensure
\Statex $\mathbf{P}$: a locally optimal partitioning matrix;
\State Obtain the lists $\mathcal{L}_6(i,j)$, $\mathcal{L}_8(i,j)$ of cycle-$6$ candidates and cycle-$8$ candidates in the base matrix that contain node $(i,j)$, $1\leq i\leq \gamma$, $1\leq j\leq\kappa$;
\State Obtain $\mathbf{u}=\Tx{arg}\min\nolimits_{\mathbf{x}\in \mathbb{N}^{m+1}, ||\mathbf{x}||_1=\gamma\kappa}||\frac{1}{\gamma\kappa}\mathbf{x}-\mathbf{p}||_2$;
%\State $x\gets \Floor{\frac{u\left[1\right]+u\left[m+1\right]}{\kappa}}$, $y\gets (u\left[1\right]+u\left[m+1\right])\mod \kappa$;
%\State Place $u\left[1\right]$ $0$'s and $u\left[m+1\right]$ $m$'s in $\mathbf{P}$ such that 
\For{$i\in\{0,1,\dots,m\}$} 
\State Place $\mathbf{u}\left[i+1\right]$ $i$'s into $\mathbf{P}$ randomly;
\EndFor
\State $\mathbf{d}\gets \mathbf{0}_{m+1}$;
\For{$i\in\{1,2,\dots,\gamma\}$, $j\in\{1,2,\dots,\kappa\}$}
\State $noptimal\gets \Tx{False}$;
\State $n_6\gets \Nm{\mathcal{L}_6(i,j)}$, $n_8\gets \Nm{\mathcal{L}_8(i,j)}$, $n\gets wn_6+n_8$;
\For{$v\in\{0,1,\dots,z-1\}$}
\State $\mathbf{d}'=\mathbf{d}$, $\mathbf{d}'\left[v+1\right]\gets \mathbf{d}'\left[v+1\right]+1$, $p\gets \mathbf{P}(i,j)$;
\If{$||\mathbf{d}'||_1\leq d_1$ and $||\mathbf{d}'||_{\infty}\leq d_2$}
\State $\mathbf{P}(i,j)\gets v$;
\State $t_6\gets \Nm{\mathcal{L}_6(i,j)}$, $t_8\gets \Nm{\mathcal{L}_8(i,j)}$, $t\gets wt_6+t_8$;
\If{$t<n$}
\State $noptimal\gets\Tx{True}$, $n\gets t$, $\mathbf{d}\gets\mathbf{d}'$;
\Else 
\State $\mathbf{P}(i,j)\gets p$;
\EndIf
\EndIf
\EndFor
\EndFor
\If{$noptimal$} 
\State \textbf{goto} step 6;
\EndIf
\State \textbf{return} $\mathbf{P}$;
\end{algorithmic}
\end{algorithm}

We refer to codes obtained from \Cref{algo: AO} as \textbf{gradient descent (GD) codes}. By replacing the input distribution $\mathbf{p}$ in \Cref{algo: AO} with the uniform distribution, we obtain the so-called \textbf{uniform (UNF) codes}. We show in \Cref{section: simulation} by simulation that the distribution obtained by GRADE results in constructions that are better than those adopting uniform distribution and in existing literature.

\subsection{Topologically-Coupled Codes}
\label{subsec: TC}

In this subsection, we explore SC codes with pseudo-memory $m_t$ such that $m_t < m$ and $\mathbf{a}\neq (0,1,\dots,m)$. The motivation behind this task is to construct an SC code with memory $m$ with the same computational complexity needed to construct a memory $m_t$ code, where $m_t<m$. Given $m_t$ and $m$, we first find the optimal $\mathbf{a}$ with length $m_t+1$ in a brute-force manner. Taking $m=4$ and $m_t=2$ as an example, the~optimal coupling pattern is reached by $\mathbf{a}=(0,1,4)$ and the corresponding optimal distribution is almost uniform.

Given the optimal coupling pattern $\mathbf{a}$, we then obtain an optimal partitioning matrix by the OO method proposed in \cite{esfahanizadeh2018finite} and \cite{hareedy2020channel}. We extend the OO method for memory $m_0$ SC codes to any SC code with pseudo-memory $m_t=m_0$, which does not increase the complexity of the approach. We refer to the codes obtained from the extended OO method followed by the CPO as \textbf{topologically-coupled (TC) codes}. 

Optimal TC codes with pseudo-memory $m_t$ have strictly fewer cycle candidates in their protographs than optimal SC codes with full memory $m=m_t$. Take $m=4$ and $m_t=2$ as an example. Suppose the optimal SC code has the partition $\bm{\Pi}=\mathbf{H}_0^{\Tx{P}}+\mathbf{H}_1^{\Tx{P}}+\mathbf{H}_2^{\Tx{P}}$. Consider the TC code with partition $\bm{\Pi}=\mathbf{H}_0^{\Tx{P}}+\mathbf{H}_1^{\Tx{P}}+\mathbf{H}_4^{\Tx{P}}$ such that $\mathbf{H}_2^{\Tx{P}}=\mathbf{H}_4^{\Tx{P}}$. Then, any cycle-$6$ candidate resulting from a cycle candidate in the base matrix assigned with $0$-$1$-$0$-$1$-$2$-$0$, $1$-$2$-$1$-$2$-$2$-$0$, or $0$-$1$-$2$-$1$-$x$-$x$, $x\in\{0,1,2\}$, in $\mathbf{P}$ no longer has a counterpart in the TC code, since by replacing $2$'s with $4$'s, assignments $0$-$1$-$0$-$1$-$4$-$0$, $1$-$4$-$1$-$4$-$4$-$0$, and $0$-$1$-$4$-$1$-$x$-$x$, $x\in\{0,1,4\}$, no longer satisfy the cycle condition in \Cref{lemma: cycle condition}. Moreover, there exists a bijection between the remaining candidates in the SC code and all candidates in the TC code through the replacement of $2$'s with $4$'s. Therefore, TC codes are better (have less cycles) than SC codes with the same circulant size. In \Cref{section: simulation}, we present simulation results of such codes and show that they can also outperform SC codes with the same constraint length (larger circulant size).

\section{Simulation Results}
\label{section: simulation}

In this section, we obtain the frame error rate (FER) curves of four groups of SC codes designed by the GRADE-AO methods presented in \Cref{section: construction}. We show that codes constructed by the GRADE-AO methods offer significant performance gains compared with codes with uniform edge distributions and codes constructed through purely algorithmic methods.

Out of these three plots, the left and center ones compare GD codes with UNF codes designed as in \Cref{subsec: GD}. The right plot compares a TC code designed as in \Cref{subsec: TC} with optimal SC codes constructed through the OO-CPO method proposed in \cite{esfahanizadeh2018finite}. The GD/UNF codes have parameters $(\gamma,\kappa,m,z,L)=(3,7,5,13,100)$, $(3,17,9,7,100)$, and $(4,29,19,29,20)$, respectively. The TC code has parameters $(\gamma,\kappa,m_t,z,L)=(4,17,2,17,50)$ with coupling pattern $\mathbf{a}=(0,1,4)$. For a fair comparison, we have selected two SC codes: one with a similar constraint length $(m+1)z$ and the other with an identical circulant power $z$. To ensure that the SC codes and the TC code have close rates and codelengths, the two SC codes have parameters $(\gamma,\kappa,m,z,L)=(4,17,2,28,30)$ and $(4,17,2,17,50)$, respectively. The statistics regarding the number of cycles of each code are presented in \Cref{table: cycle statistics}.

\begin{table}
\label{fig: partition_gamma4}
\centering
\caption{Statistics of the Number of Cycles}
\begin{tabular}{c|c|c|c}
\toprule
 $(\gamma,\kappa)$ & Code & Cycles-$6$ & Cycles-$8$ \\
\hline
\multirow{2}{*}{$(3,7)$} & GD & 0 & 0\\
 & UNF & 0 & 6292\\
 \hline
\multirow{3}{*}{$(3,17)$} & GD & 0 & 397880\\
 & UNF & 0 & 559902 \\
 & Battaglioni et al.\cite{battaglioni2017design}& 0 & 451337\\
 \hline
\multirow{2}{*}{$(4,29)$} & GD & 0 & 528,090\\
 & UNF & 0 & 1,087,268\\
 \hline
\multirow{3}{*}{$(4,17)$} & TC & 15,436 & -\\
 & SC (matched constraint length) & 19,180 & -\\
 & SC (matched circulant size)& 74,579 & -\\
\bottomrule
\end{tabular}
\label{table: cycle statistics}
\end{table}

Fig.~\ref{fig: FER_GD_UNF_3} shows FER curves of our GD/UNF comparisons with $(\gamma,\kappa)=(3,7)$ and $(3,17)$. The partitioning matrices and the lifting matrices of the codes are specified in Fig.~\ref{fig: code_GD_UNF_3_7}-\ref{fig: code_GD_UNF_3_17}. When $\gamma=3$, cycles-$6$ are easily removed by the CPO. Therefore, we perform joint optimization on the number of cycle-$6$ and cycle-$8$ candidates by assigning different weights to cycle candidates in \Cref{algo: AO}. We observe a performance gain for the GD code with respect to the UNF code in both the waterfall region and the error floor region. Moreover, the number of cycles-$8$ in the $(3,17)$ GD code is reduced by $29\%$ and $12\%$ compared with the UNF code and the code constructed by Battaglioni et al. in \cite{battaglioni2017design}, respectively. In addition, the $(3,17)$ GD code has no weight-$6$ absorbing sets (ASs) and $133$ weight-$7$ ASs, whereas the UNF code has $6$ weight-$6$ ASs and $361$ weight-$7$ ASs. As for the $(3,7)$ codes, all cycles-$6$ and cycles-$8$ are removed. Thus, the gain of the GD code compared with the UNF code exceeds the gain observed in the $(3,17)$ codes.

\begin{figure}
\centering
  \subfigure[GD code.]{%
       \includegraphics[width=0.35\textwidth]{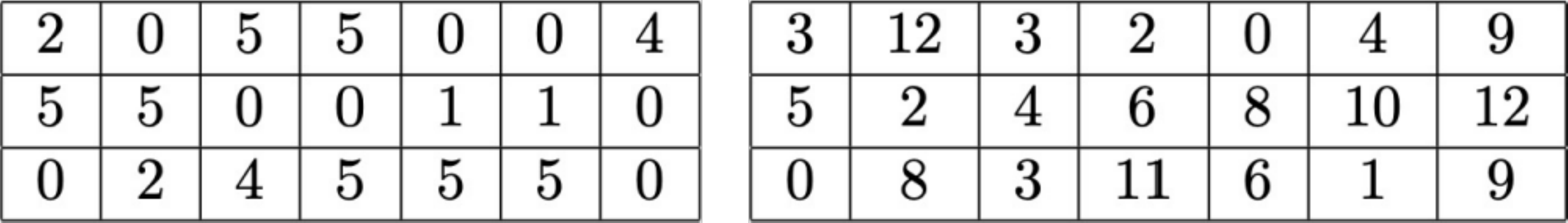}
       \label{1a}
       }
  \subfigure[UNF code.]{%
        \includegraphics[width=0.35\textwidth]{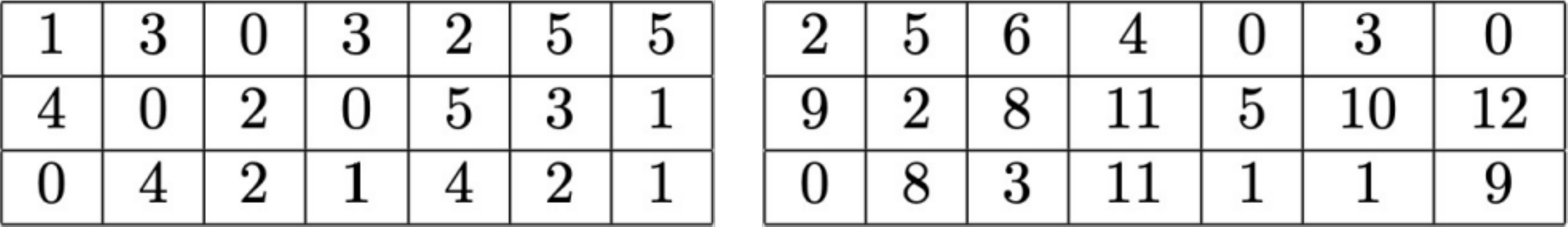}
        \label{1b}
        }
  \caption{Partitioning matrices (left) and circulant power matrices (right) of GD/UNF codes with $(\gamma,\kappa,m,z,L)=(3,7,5,13,100)$.}
  \label{fig: code_GD_UNF_3_7} 
\end{figure}

\begin{figure}
\centering
  \subfigure[GD code.]{%
       \includegraphics[width=0.38\textwidth]{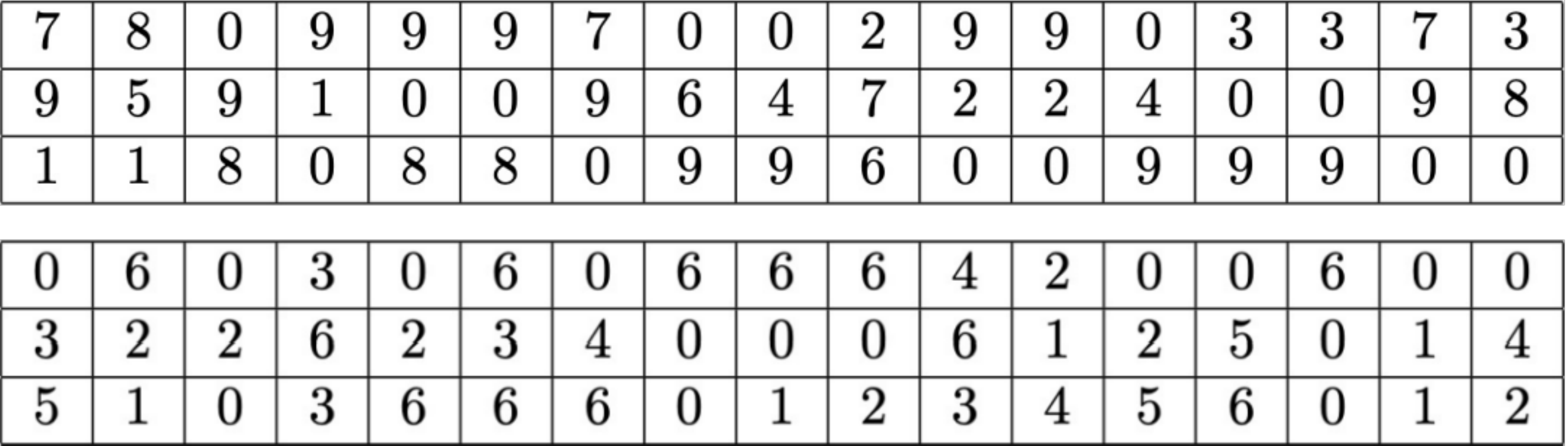}
       \label{1a}
       }
  \subfigure[UNF code.]{%
        \includegraphics[width=0.38\textwidth]{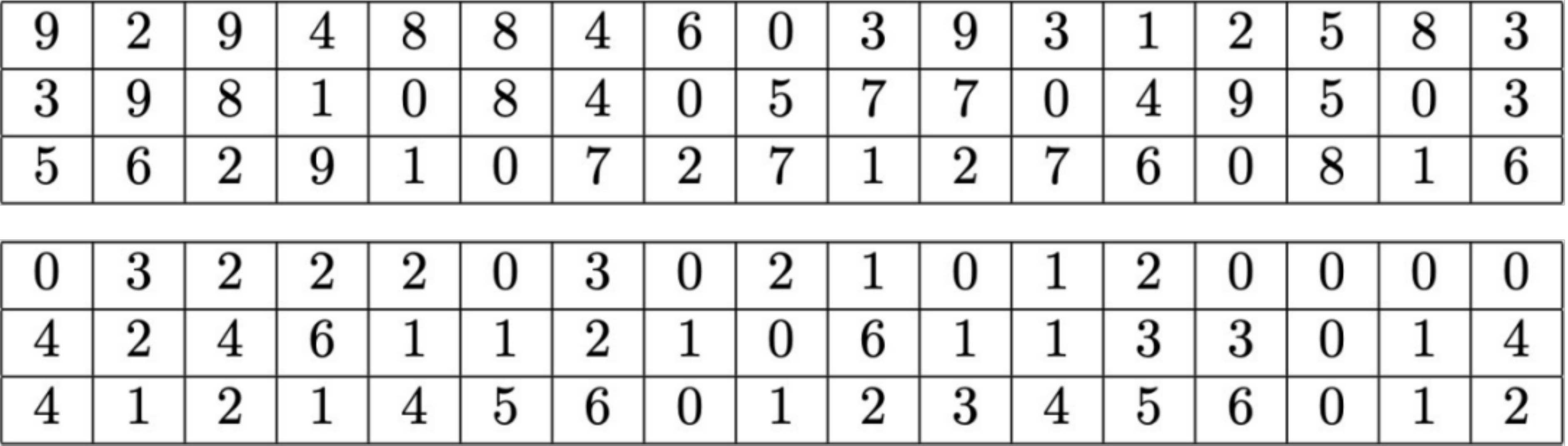}
        \label{1b}
        }
  \caption{Partitioning matrices (top) and circulant power matrices (bottom) of GD/UNF codes with $(\gamma,\kappa,m,z,L)=(3,17,9,7,100)$.}
  \label{fig: code_GD_UNF_3_17} 
\end{figure}

\begin{figure}
\centering
\includegraphics[width=0.5\textwidth]{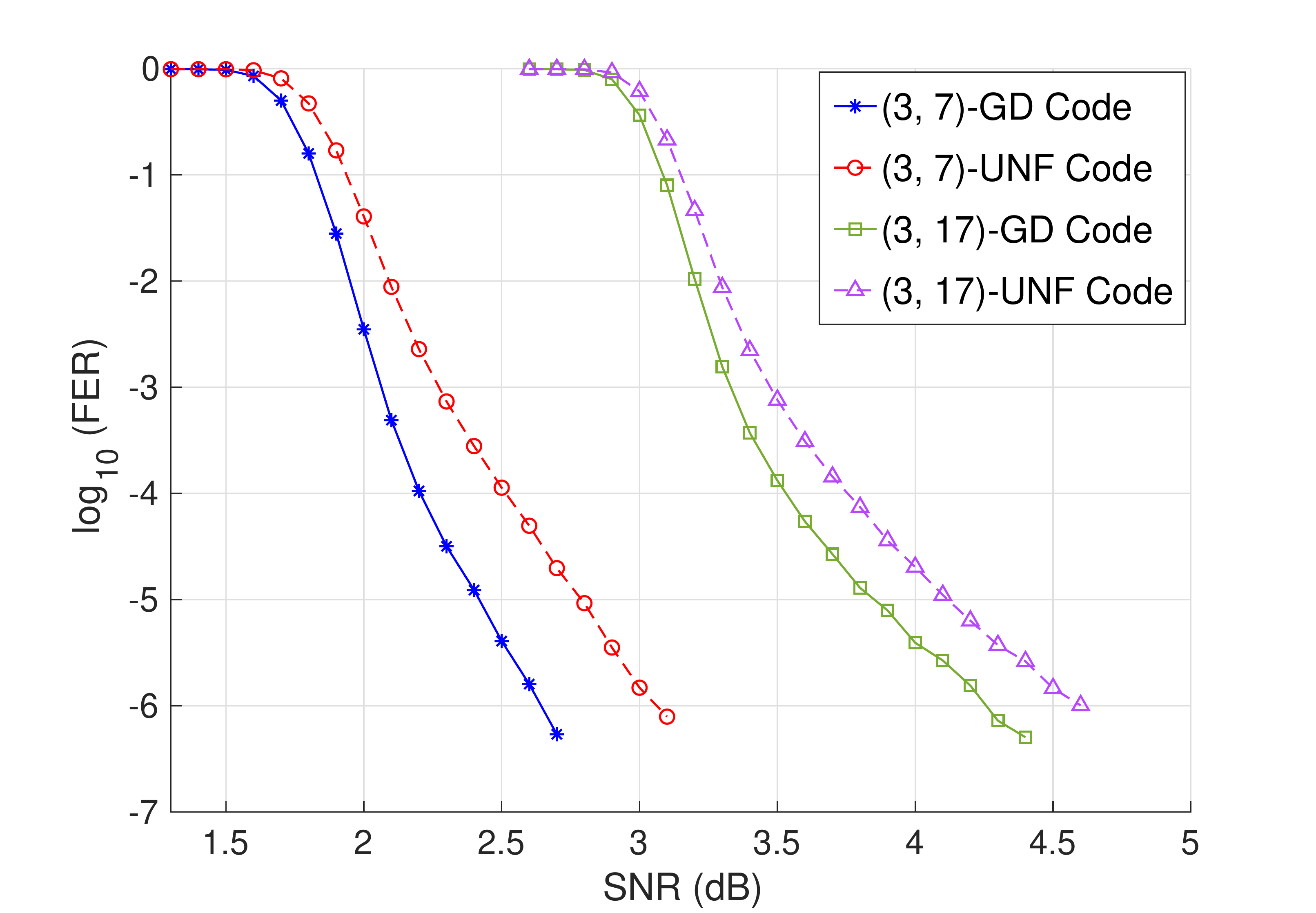}
\caption{FER curves of GD/UNF codes with $\gamma=3$.}
\label{fig: FER_GD_UNF_3}
\end{figure}

\begin{figure}
\centering
\includegraphics[width=0.37\textwidth]{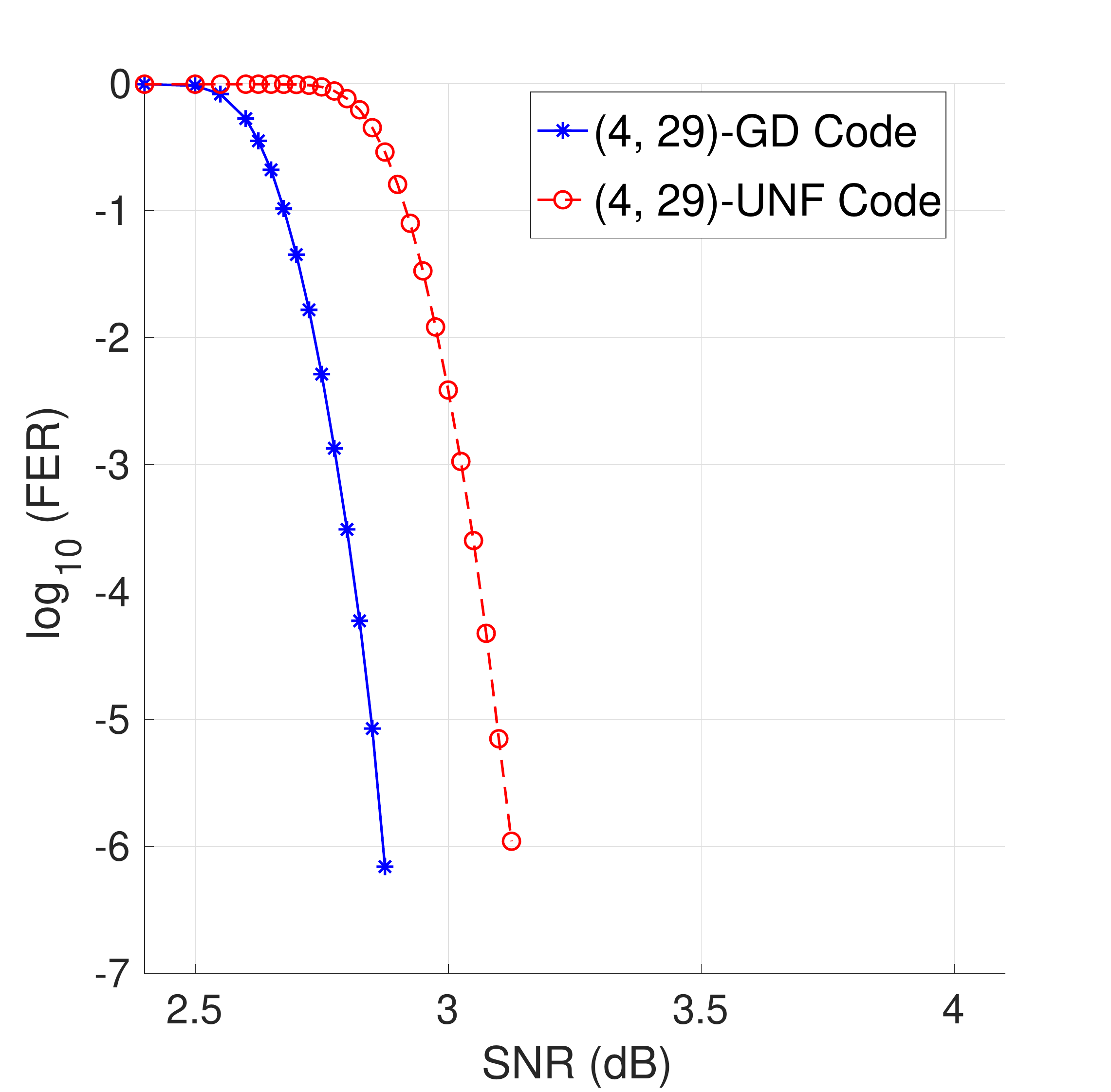}
\caption{FER curves of GD/UNF codes with $(\gamma,\kappa)=(4,29)$.}
\label{fig: FER_GD_UNF_4_29}
\end{figure}

\begin{figure}
\centering
\includegraphics[width=0.42\textwidth]{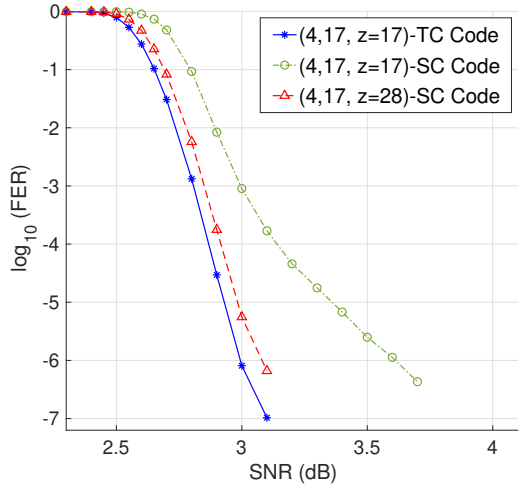}
\caption{FER curves of TC/SC codes with $(\gamma,\kappa)=(4,17)$.}
\label{fig: FER_TC_SC_4_17}
\end{figure}

\begin{figure*}
\centering
\resizebox{0.87\textwidth}{!}{\begin{tabular}{|c|c|c|c|c|c|c|c|c|c|c|c|c|c|c|c|c|c|c|c|c|c|c|c|c|c|c|c|c|}
\hline
0&0&0&19&17&17&1&11&13&5&18&10&19&13&1&6&19&8&19&0&19&0&0&0&2&17&6&19&4\\
\hline
19&18&18&2&0&19&19&5&3&19&9&2&9&9&3&17&6&0&2&16&12&13&8&18&16&0&17&10&0\\
\hline
1&14&3&16&7&1&4&19&5&0&0&16&0&0&7&19&10&19&16&18&3&18&15&3&19&8&19&1&15\\
\hline
16&0&14&1&11&2&15&2&19&16&18&0&19&19&19&0&0&5&1&0&9&4&19&14&7&12&0&19&1\\
\hline
\end{tabular}}
\vspace{5pt}

\resizebox{0.87\textwidth}{!}{\begin{tabular}{|c|c|c|c|c|c|c|c|c|c|c|c|c|c|c|c|c|c|c|c|c|c|c|c|c|c|c|c|c|}
\hline
7&1&18&21&5&4&17&0&6&16&26&8&13&7&5&6&9&2&0&0&0&0&0&5&0&4&19&0&0\\
\hline
3&15&4&1&5&3&12&19&10&21&19&3&4&19&28&1&3&5&12&18&11&10&15&17&19&21&18&25&27\\
\hline
0&8&16&24&3&11&20&20&6&14&22&1&9&2&25&21&12&7&28&6&15&23&19&10&0&1&4&13&21\\
\hline
0&18&7&25&1&3&21&10&28&17&6&24&13&2&20&9&27&16&5&23&12&1&19&8&26&15&4&22&11\\
\hline
\end{tabular}}

\caption{Partitioning matrix (top) and circulant power matrix (bottom) for GD code with $(\gamma,\kappa,m,z,L)=(4,29,19,29,20)$.}
  \label{fig: code_GD_4_29} 
\end{figure*}

\begin{figure*}
\centering
\resizebox{0.87\textwidth}{!}{\begin{tabular}{|c|c|c|c|c|c|c|c|c|c|c|c|c|c|c|c|c|c|c|c|c|c|c|c|c|c|c|c|c|}
\hline
0&17&3&13&1&14&4&12&4&15&10&2&17&2&18&11&17&15&11&3&13&12&13&6&2&5&14&13&14\\
\hline
8&0&19&19&18&8&5&18&13&6&11&3&2&4&11&3&9&15&16&7&7&12&19&16&4&9&0&13&3\\
\hline
14&6&12&10&12&1&17&9&7&5&16&19&1&15&5&19&6&5&15&7&0&2&3&10&15&9&6&7&11\\
\hline
17&14&0&2&9&18&12&1&8&11&4&4&7&10&1&8&14&8&0&16&17&16&1&0&10&18&18&10&8\\
\hline
\end{tabular}}
\vspace{5pt}

\resizebox{0.87\textwidth}{!}{\begin{tabular}{|c|c|c|c|c|c|c|c|c|c|c|c|c|c|c|c|c|c|c|c|c|c|c|c|c|c|c|c|c|}
\hline
12&1&1&7&14&27&4&26&25&2&0&6&15&7&24&1&1&6&17&5&13&19&2&0&11&0&0&0&1\\
\hline
5&2&4&22&8&5&23&1&4&18&28&1&19&17&22&6&3&3&14&9&11&13&15&3&0&2&3&25&27\\
\hline
23&8&2&24&3&7&1&27&6&14&21&12&9&17&5&4&12&20&28&7&7&13&2&25&18&26&5&13&21\\
\hline
28&18&7&4&14&3&21&10&28&17&6&24&13&2&9&8&1&26&5&23&12&1&19&8&26&15&4&22&11\\
\hline
\end{tabular}}

\caption{Partitioning matrix (top) and circulant power matrix (bottom) for UNF code with $(\gamma,\kappa,m,z,L)=(4,29,19,29,20)$.}
  \label{fig: code_UNF_4_29} 
\end{figure*}

Fig.~\ref{fig: FER_GD_UNF_4_29} shows FER curves of the GD/UNF comparison with $(\gamma,\kappa)=(4,29)$. The partitioning matrices and the lifting matrices of the codes are specified in Fig.~\ref{fig: code_GD_4_29}-\ref{fig: code_UNF_4_29}. Cycles-$6$ in the GD code and the UNF code are both removed, and the number of cycles-$8$ in the GD code demonstrates a $51.4\%$ reduction from the count observed in the UNF code. It is worth mentioning that both codes have no ASs of weights up to $8$, which is reflected in their FER curves via the sharp waterfall regions and the non-existing error floor regions. The FER of the GD/UNF codes decreases with a rate exceeding $12$ orders of magnitude per $0.5$ dB. Moreover, the GD code has a significant gain of about $0.25$ dB over the UNF code. These results substantiate the significant potential of the GRADE-AO method in constructing SC codes with superior performance for storage devices, with further applications including wireless communication systems.

\begin{figure*}
\centering
  \subfigure[TC code with $(z,L)=(4,17,2,17,50)$ and $\mathbf{a}=(0,1,4)$.]{%
        \includegraphics[width=0.46\textwidth]{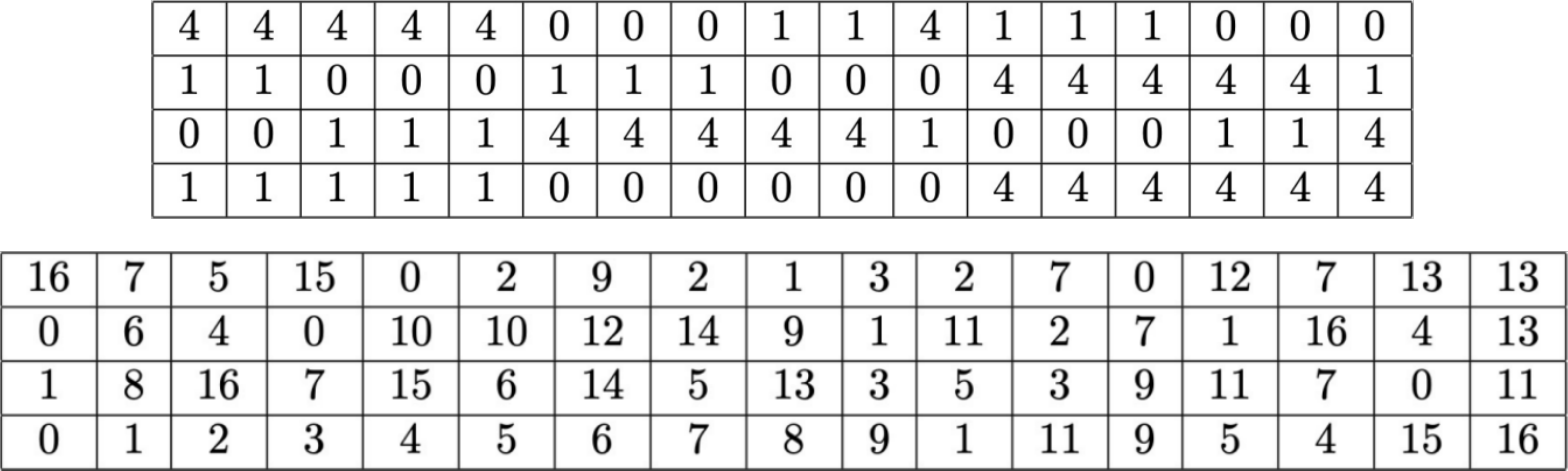}
        \label{1b}
        }
  \subfigure[SC codes with $(z,L)=(17,50)$ (middle) and $(z,L)=(28,30)$ (bottom).]{%
        \includegraphics[width=0.5\textwidth]{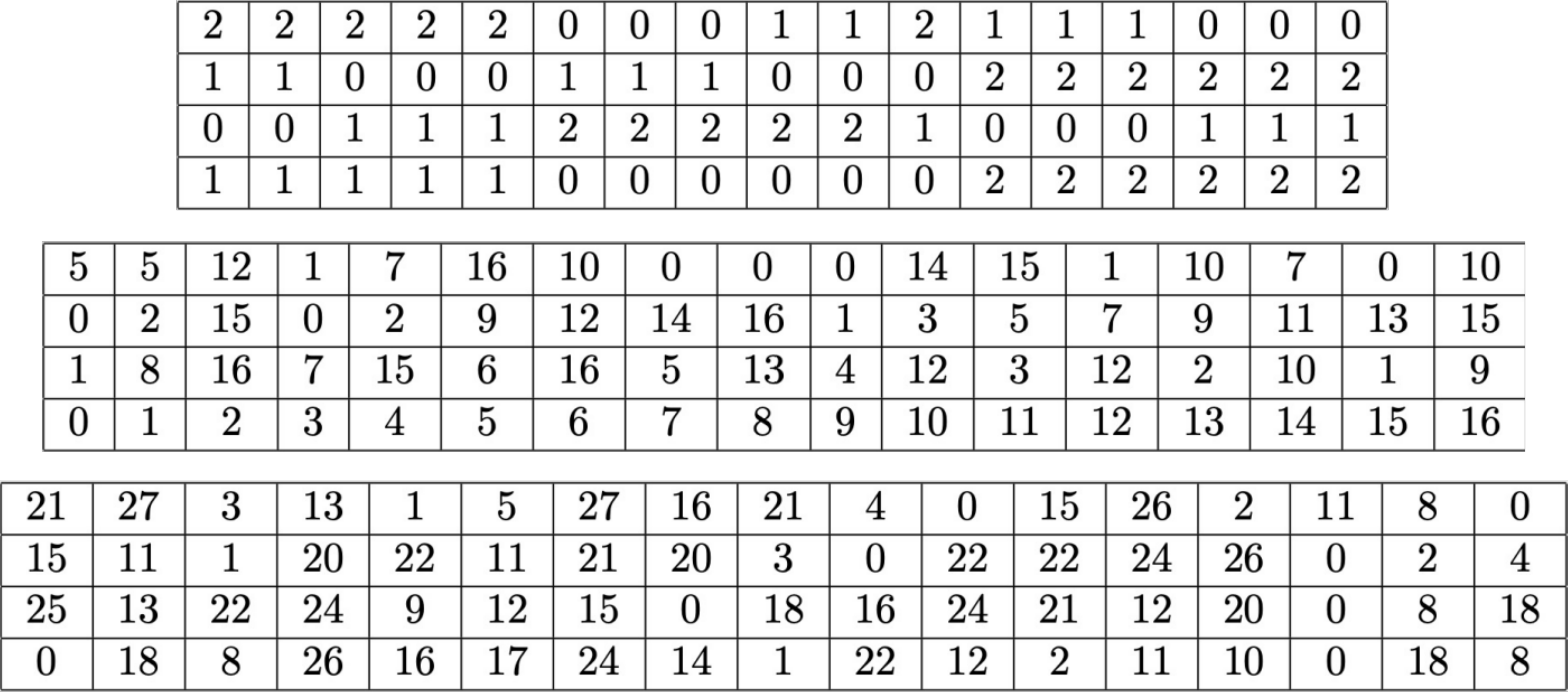}
        \label{1d}
        }
  \caption{Partitioning matrices (top) and circulant power matrices (bottom) for TC/SC codes with $(\gamma,\kappa,m_t)=(4,17,2)$.}
  \label{fig: code_TC_SC_4_17} 
\end{figure*}

Fig.~\ref{fig: FER_TC_SC_4_17} shows the FER curves of the TC/SC codes with $(\gamma,\kappa)=(4,17)$. The partitioning matrices and the lifting matrices of the codes are specified in Fig.~\ref{fig: code_TC_SC_4_17}. The number of cycles-$6$ in the $(4,17)$ TC code demonstrates a $79\%$ and a $20\%$ reduction from the counts observed in the SC codes with a matched constraint length and a matched circulant size, respectively. Moreover, the TC code has no weight-$6$ nor weight-$8$ ASs. It is shown that the TC code outperforms the optimal SC code with a matched constraint length, and that the gain is of greater magnitude when compared with the SC code of identical circulant size. Note that although TC codes have higher memories and thus larger constraint lengths than SC codes of matched circulant sizes, they possess the same number of nonzero component matrices, and thus the same degrees of freedom in construction. This fact makes TC codes even more promising if we can devise for them windowed decoding algorithms with window sizes that are comparable to the corresponding SC codes of matched circulant sizes.

\section{Conclusion}
\label{section: conclusion}
Discrete optimization of the constructions of spatially-coupled (SC) codes with high memories is known to be computationally expensive. Algorithmic optimization is efficient, but can hardly guarantee the performance because of the lack of theoretical guidance. In this paper, we proposed a so-called GRADE-AO method, a probabilistic framework that efficiently searches for locally optimal QC-SC codes with arbitrary memories. We obtain a locally optimal edge distribution that minimizes the expected number of cycle candidates by gradient descent. Starting from a random partitioning matrix with the derived edge distribution, we use algorithmic optimization to find a locally optimal partitioning matrix near it. Simulation results show that our proposed constructions have a significant performance gain over state-of-the-art codes. Future work includes extending the framework on cycle optimization into a one that focuses on detrimental objects.

\section*{Acknowledgment}
This work was supported in part by UCLA Dissertation Year Fellowship, NSF under the Grants CCF-BSF 1718389, CCF 1717602, CCF 2008728, and CCF 1908730, and in part by AFOSR under the Grant 8750-20-2-0504.

%\newpage
\balance
\bibliography{ref}
\bibliographystyle{IEEEtran}

\end{document}